\newtheorem{lemma}{Lemma}
\newtheorem{theorem}{Theorem}
\newtheorem{corollary}{Corollary} 
\newenvironment{framework}[1][htb]
{

	\begin{algorithm}[#1]%
	}{\end{algorithm}}
\newcommand\myeq{\mathrel{\overset{\makebox[0pt]{\mbox{\normalfont\tiny\sffamily def}}}{=}}}
\newcommand{\Expect}{{\rm I\kern-.3em E}}
\newcommand{\Probability}{{\rm I\kern-.3em Pr}}
\newcommand{\problemName}{spectrum monitoring problem}
\DeclareMathOperator*{\argmax}{argmax} 
\newcommand{\I}{{\rm I}}
\newcommand{\II}{{\rm I\kern-0.2ex I}}
\newcommand{\III}{{\rm I\kern-0.2ex I\kern-0.2ex I}}
\newcommand{\IV}{{\rm I\kern-0.2ex V}}
\newcommand{\alg}{\mbox{\rm SpecWatch}}
\newcommand{\algsfull}{{\algB} and {\algC}}
\newcommand{\algA}{\mbox{\rm SpecWatch-{\I}}}
\newcommand{\algAabr}{{\I}}
\newcommand{\algB}{\mbox{\rm SpecWatch-{\II}}}
\newcommand{\algBabr}{{\II}}
\newcommand{\algC}{\mbox{\rm SpecWatch-{\III}}}
\newcommand{\algCabr}{{\III}}
\begin{document}
	
	\title{SpecWatch: A Framework for Adversarial Spectrum Monitoring with Unknown Statistics}
	
	\author{
	Ming~Li,~\IEEEmembership{Member,~IEEE,}
	Dejun~Yang,~\IEEEmembership{Member,~IEEE,}
	Jian~Lin,~\IEEEmembership{Member,~IEEE,}        Ming~Li,~\IEEEmembership{Member,~IEEE,}
	and~Jian~Tang,~\IEEEmembership{Senior Member,~IEEE}
		\vspace{-.3in}
\thanks{This is an extended and enhanced version of the paper [50] that appeared in INFOCOM 2016. This research was supported in part by NSF grants 1443966, 1444059, and 1619728. The information reported here does not reflect the position or the policy of the federal government.}
\thanks{
M. Li, D. Yang and J. Lin are with Colorado School of Mines, Golden, CO 80401.
E-mail: \texttt{\{mili, djyang, jilin\}@mines.edu}.}
\thanks{ M. Li is with the University of Arizona, Tucson, AZ 85721. E-mail: \texttt{lim@email.arizona.edu}.}
\thanks{J. Tang is with Syracuse University, Syracuse, NY 13210. E-mail: \texttt{jtang02@syr.edu}.}
}
	
	\maketitle
	
	\vspace{-.5in}	
	\begin{abstract}
	In cognitive radio networks (CRNs), dynamic spectrum access has been proposed to improve the spectrum utilization, but it also generates spectrum misuse problems. 
	One common solution to these problems is to deploy monitors to detect misbehaviors on certain channel.
	However, in multi-channel CRNs, it is very costly to deploy monitors on every channel. 
	With a limited number of monitors, we have to decide which channels to monitor.
	In addition, we need to determine how long to monitor each channel and in which order to monitor, because switching channels incurs costs. 
	Moreover, the information about the misuse behavior is not available a priori. 
	To answer those questions, we model the {\problemName} as an adversarial multi-armed bandit problem with switching costs (MAB-SC), propose an effective framework, and design two online algorithms, {\algsfull}, based on the same framework. 
	%
	%
	%
	To evaluate the algorithms, we use \textit{weak regret},  i.e., the performance difference between the solution of our algorithm and optimal (fixed) solution in hindsight, as the metric.
	We prove that the \textit{expected} weak regret of {\algB}  is $O(T^{2/3})$, where $T$ is the time horizon. Whereas, the \textit{actual} weak regret of {\algC}  is $O(T^{2/3})$ with probability $1-\delta$, for any $\delta \in (0,1)$.
	Both algorithms guarantee the upper bounds matching the lower bound of the general adversarial MAB-SC problem.
	%
	Therefore, they are all asymptotically optimal.
	
\end{abstract}

\begin{IEEEkeywords}
 Cognitive Radio Networks, Spectrum Monitoring, Multi-armed Bandit Problem.
\end{IEEEkeywords}
	
	\IEEEpeerreviewmaketitle
	
	\section{Introduction}

With the proliferation of  wireless devices and applications, demand for access to spectrum has been growing dramatically and is likely to continue to grow in the foreseeable future~\cite{force2002report}. 
However, there is a paradoxical phenomenon that usable radio frequencies are exhausted while much of the licensed spectrum lies idle at any given time and location~\cite{zhao2007survey}.
To improve the radio spectrum utilization efficiency, dynamic spectrum access (DSA) in cognitive radio networks (CRNs) has been proposed as a promising approach.
Among various DSA strategies, opportunistic spectrum access (OSA) based on the hierarchical access model has received much attention recently~\cite{santivanez2006opportunistic, anandkumar2010opportunistic, tekin2011online, xu2012opportunistic,altrad2014opportunistic,yadav2015opportunistic,wang2016jamming,tsiropoulos2016radio}.
This underlay approach achieves spectrum sharing by allowing secondary (unlicensed) users (SUs) to dynamically search and access the spectrum vacancy while limiting the interference perceived by primary (licensed) users (PUs)~\cite{wang2011anti}. 
OSA helps to improve the spectrum utilization but also results in spectrum misuse or abuse problems due to the flexibility of spectrum opportunity.
For example, an SU may intentionally disobey the interference constraints set by the PU; or some greedy SUs may transmit more aggressively in time and frequency to dominate the spectrum sharing, or even emulate the PU to prevent other SUs from sharing.
Through such spectrum access misbehavior, the malicious users (MUs), i.e., the misbehaving SUs, not only harm the spectrum access operations of normal users, but also impede the CRNs to function correctly since there is no incentive to pay for spectrum access~\cite{yang2012enforcing}. 
Thus, spectrum monitoring is necessary and imperative. 

To address the spectrum misuse problem, different trusted infrastructures have been proposed to detect spectrum misuse and punish MUs~\cite{yang2012enforcing,weiss2013enforcement,sorrells2011anomalous,kumar2016frequency}.
In addition, various detection techniques have been designed, including enforcing silence slots~\cite{atia2008spectrum}, publicizing back-off sequences~\cite{kyasanur2003detection,zhang2013countering}, exploiting spatial pattern of signal strength~\cite{liu2012detecting}, measuring detector value~\cite{tang2013selfish}.
There is also a crowdsourcing-based framework named SpecGuard~\cite{Jin2015specguard} which explores dynamic power control at SUs to contain the spectrum permit in physical layer signals. 
%
	Another crowdsourced enforcement framework~\cite{dutta2016see}
	improves the probability of detection while reducing the likelihood of false positives for spectrum misuses and it can detect misuses caused by mobile users.
%
Moreover, applying big data analysis and machine learning to cloud-based radio access networks also provide an appropriate approach to enable long-term spectrum monitoring~\cite{baltiiski2016long}.
However, all these works assume that all channels can be monitored at the same time.
%
In this paper, we consider spectrum monitoring in multi-channel CRNs with limited monitoring resource. 
In particular, we deploy one monitor with multiple radios where each radio is in charge of one channel. 
The main problem is the selection of channels since monitoring all channels simultaneously is  very energy-consuming and impractical. 
It is challenging  because the information of MUs is unknown a priori.
In addition,  switching costs caused by changing channels must be considered.
To solve this problem, we formulate it as an adversarial (non-stochastic) multi-armed bandit (MAB) problem~\cite{auer2002nonstochastic} with switching costs.
We then propose an effective spectrum monitoring framework and design two online algorithms based on it. 

In summary, we contribute in the following aspects:

\begin{enumerate}
	\item
	We study the adversarial spectrum monitoring problem with unknown statistics in multi-channel CRNs, while considering the switching cost. We model this problem as an adversarial MAB-SC problem.
	\item
	We propose an online spectrum monitoring framework, {\alg}.
	Based on this framework, we design two spectrum monitoring algorithms, {\algsfull},
%
	which differ in the way of calculating strategy probabilities and updating strategy weights.
	 Our algorithms guarantee the proved performance under any type of adversary settings. In addition, they can work with any spectrum misuse detection techniques in the current literature.
	\item
	We prove that the expected weak regret of  {\algB} is $O(T^{2/3})$, which matches the lower bound in~\cite{dekel2014bandits}. Therefore, {\algB} is asymptotically optimal. Note that the expected value of normalized weak regret is guaranteed to be $O(1/T^{1/3})$, which converges to 0 as time horizon $T$ approaches to $\infty$. 
	\item
	{\algC}  select channels more strategically and explore all channels more efficiently. We prove that this algorithm guarantees the actual weak regret to be  $O(T^{2/3})$, which is asymptotically optimal as well, with probability $1 - \delta$, for any $\delta \in (0,1)$.	
\end{enumerate}

%
%
\section{Related Work}
\subsection{Spectrum Monitoring Problem}
%

Spectrum monitoring problem has been formulated as optimization problems with different objectives.
In~\cite{shin2009optimal}, Shin and Bagchi modeled the channel assignment for monitoring wireless mesh networks as a maximum-coverage problem with group budget constraints. 
They then extended it to the model where monitors may make errors due to poor reception~\cite{shin2013toward}. 
Along the same line, Nguyen et al.~\cite{nguyen2014quality} focus on the weighted version of the problem, where users to be covered have weights. 
To maximize the captured data of interest, Chen et al.~\cite{chen2014efficient} utilized support vector regression to guide monitors to intelligently select channels.
Considering similar objectives, Shin et al.~\cite{shin2012distributed} designed a cost-effective distributed algorithm.
With a different approach, Yan et al.~\cite{yan2014specmonitor} solved the problem by predicting secondary users' access patterns. 
However, we consider a different objective in this paper, which is to capture spectrum misuses. In addition, we assume no information about the malicious users.

The closest works to ours were presented in~\cite{arora2011sequential,zheng2014approximate,le2014sequential,yi2012secondary,xu2014secondary}.
In \cite{arora2011sequential}, Arora and Szepesvari first modeled the {\problemName} as a multi-armed bandit problem (MAB) to monitor the maximum number of active users.
They designed two algorithms to learn sequentially the user activities while making channel assignment decisions.
Observing the above algorithms suffer from high computation cost, Zheng et al.~\cite{zheng2014approximate} traded off between the rate of learning and the computation cost.
They proposed a centralized online approximation algorithm
and show that it incurs sub-linear regret bounds over time and a distributed algorithm with moderate message complexity.
In~\cite{le2014sequential}, Le et al. considered switching costs for the first time and utilized Upper Confident
Bound-based (UCB) policy~\cite{auer2002finite} which enjoys a logarithmic regret bound
in time that depends sublinearly on the number of arms, while its
total switching cost grows in the order of $O(\log (\log T))$. 
Considering a different objective, Yi et al.~\cite{yi2012secondary} used UCB to capture as much as interested user data.

However, these works used the stochastic MAB model, where the rewards for playing each arm are generated independently from unknown but fixed distributions.
Our model, in contrast, does not make such assumptions.
The only work considered the similar problem model to ours is~\cite{xu2014secondary}, where Xu et al. tried to capture packets of target SUs for CRN forensics.
However, they did not provide any algorithm whose actual weak regret can be bounded with a confidence value.

\subsection{Multi-arm Bandit Problem}
The MAB problem first introduced by Robbins~\cite{robbins1952some} has been extensively studied in the literature. The classical MAB problem models the trade-offs faced by a gambler who aims to maximize his rewards over many turns by exploring different arms of slot machines and to exploit arms which have provided him more rewards than others. 
The gambler has no knowledge about the reward of each arm a priori and only gains knowledge of the arms he has pulled. An MAB algorithm should specify a strategy by which the gambler chooses an arm at each turn. 
The performance of an algorithm is measured in regret, as will be elaborated in Section \ref{sec: model}. 
There are mainly two algorithm families based on different formulations of MAB. 
The upper confidence bounds (UCB) family of algorithms~\cite{lai1985asymptotically} works for stochastic MAB, whose regret can be as small as $O(\ln  T)$ where $T$ is the number of turns.
However, these algorithms are established  with the assumption that there exist fixed (though unknown) probability distributions of different arms to generate rewards, which may not be satisfied in our spectrum monitoring problem and thus not considered by us. 
The other algorithm family is the EXP3 family~\cite{auer2002nonstochastic}  for adversarial MAB. Auer~\cite{auer2002nonstochastic} has studied MAB with no assumption on the rewards distribution and proposed algorithms with regret of $O(T^{1/2})$. 
There are also some algorithms considering both stochastic and adversarial adversary \cite{bubeck2012regret,auer2016algorithm}.
However, switching costs are not considered in all above works. 
In our model, each time the monitor changes its monitoring channels, there are drastic costs in terms of delay, packet loss, and protocol overhead \cite{lin2011Opportunistic}. These costs must be taken into consideration when designing monitoring algorithms. 
Although there exists some work on stochastic multi-armed bandit problem with switching cost (MAB-SC)~\cite{lin2011Opportunistic}, little research has been done on adversarial MAB-SC.
%
Dekel et al.~\cite{dekel2014bandits} proved the lower bound of the regret for adversarial MAB-SC to be $\tilde{\Omega}$ ($T^{2/3} $). In this paper, the upper bound of regret guaranteed by  our algorithms matches this lower bound. Moreover, different from existing works, the strategy for each turn (or timeslot as in our model) is no longer a single arm because we consider a more general case where multiple channels can be monitored at the same time. 
Therefore, none of above algorithms can be directly applied to our problem.
	
%
%
\section{System Model and Problem Statement}\label{sec: model}

We consider a cognitive radio network which adopts a hierarchical access structure with primary users (PUs) and secondary users (SUs). We assume the spectrum is divided into a set $\mathcal{K} = \{1, 2, \ldots, k , \ldots, K\}$ of $K$ channels. 
The total time period is discretized into a set $\mathcal{T} = \{1, 2, \ldots, t , \ldots, T\}$ of $T$ timeslots.
Ideally, SUs seek spectrum opportunities among $K$ channels in a non-intrusive manner. 
However, the malicious users (MUs) may perform unauthorized access or selfish access. 
We consider the scenario where there exists one monitor with $l$ radios and a set $\mathcal{M} = \{1, 2, \ldots, m, \ldots, M\}$ of $M$ MUs.
Note that for the case of  multiple monitors, if there is a central controller, it is equivalent to one monitor with the same number of radios; otherwise, each monitor can execute our algorithms independently. In the latter case, however, the regret may not be bounded.

Since the monitor is equipped with $l$ radios, it can monitor up to $l$ channels at the same time.  
Assume that one radio is tuned to monitor channel $k$, and there are $M_k$ MUs on that channel, then the detection probability of that radio to successfully detect MUs' presence is $p_d(M_k)$, which is dependent on the monitor's hardware and the detection technique. 
Any technique in~\cite{atia2008spectrum,liu2012detecting,zhang2013countering,tang2013selfish,kyasanur2003detection} can be adopted to detect spectrum misuses.
In practice, the detection probability will also be dependent on the presence of PU and other SUs. However, since our algorithms do not require the knowledge of the detection probability, we simplify the notion to $p_d(M_k)$ where it seems that only $M_k$ matters.

Let  $\{0,1, \ldots, l\}^K$ denote the strategy space of the monitor. A \textit{strategy} $s$ is represented as $(a_{s1}, a_{s2}, \ldots, a_{sK} )$, where the value of the $a_{sk}$ is 1 if a radio is assigned to monitor channel $k$, 0 otherwise. Therefore, $\sum_{k \in \mathcal{K}} a_{sk} = l$.
For example,  considering 4 channels and a monitor with 2 radios, strategy $(0, 1, 0, 1)$ indicates that one radio is tuned to monitor channel 2 and the other radio is tuned to monitor channel 4.
For notational simplicity, we will write $k \in s$ instead of $a_{sk} = 1$ to denote that channel $k$ is chosen in strategy $s$. 
Since each radio is assigned one  out of $K$ channels to monitor, and we have $l$ radios in total, the number of strategies is $S = \binom{k}{l}$. The whole strategy set is represented as $\mathcal{S} = \{1, 2, \ldots, s, \ldots, S\}$.  Note that $K$ and $l$ are usually small. For example, the regulated 2.4 GHz band is divided into only 14 channels. The maximum number of radios on each monitor defined by the active IEEE 802.11af standard is set to be 4~\cite{IEEE80211af}.

%
	In this paper, we assume both the monitor and MUs are static, i.e., staying at the same location. Note that when mobility is considered as in the pursuit-evasion problems~\cite{wang2016learning}, we only need to enlarge the strategy space by including the location dimension.
%

At the beginning of timeslot $t \in \mathcal{T}$, the monitor selects only one strategy from the strategy set $\mathcal{S}$, and we denote the chosen strategy as $X_t$. 
We assume the switching cost $c (X_{t-1}, X_t) \in  \left[ 0, 1 \right]$, but our algorithm can be generalized to any range $\left[ \underline{c}, \bar{c} \right]$, $\underline{c} < \bar{c}$ by scaling, where $\underline{c}$ and $\bar{c}$ are the minimum value and the maximum value of the switching cost, respectively. For simplicity, set the switching cost of the first timeslot to be $c (X_{0}, X_1) = c_0$ regardless of what $X_1$ is.
Clearly, $c (X_{t-1}, X_t)=0$ if $X_{t-1}= X_t$.

\noindent \textbf{Threat Model:} At each timeslot $t \in \mathcal{T}$, each MU $m \in \mathcal{M}$ chooses one channel to attack (conduct misuses) according to its attack probability distribution $\mathbf{P}_{t}^{m} = \left\{ P_{t, 1}^{m}, \ldots, P_{t, K}^{m}\right\}$ 
where $P_{t, k}^{m}$ denotes the probability of MU $m$ attacking channel $k$ in timeslot $t$. Since MUs may not attack in some timeslot, $\sum_{k \in \mathcal{K}} P_{t, k}^{m} \leq 1$ for any $m \in \mathcal{M}$ and $t \in \mathcal{T}$. We consider two types of adversary: 
\begin{enumerate}
	\item 
	Oblivious Adversary (Stochastic): The MUs keep their attack patterns regardless of how the monitor work. For any MU $m$, the attack distribution $\mathbf{P}^m_{t}$ remains the same throughout the time horizon. In this paper, we consider three different adversary settings (elaborated in Section~\ref{simulation}): fixed adversary, uniform adversary, and normal adversary.
	
	\item
	Adaptive Adversary (Adversarial): The MUs know every action of the monitor from the beginning to the current timeslot and adjust their strategies accordingly based on any learning algorithms, i.e., the attack distribution might change with time. 
	%
\end{enumerate}
Our framework and algorithms work for both types and the theoretical bounds hold no matter what the adversary type is.

Now we define the reward for the monitor. The \textit{strategy reward} of choosing strategy $s$ in timeslot $t$ is
\begin{equation}
g_{s, t} \myeq 
\begin{cases}
\sum_{k \in s} f_{k,t}  \qquad &\mbox{if } s = X_t,\\
0 \qquad & \mbox{otherwise,}
\end{cases}
\label{eq: strategy reward sum}
\end{equation}
where the \textit{channel reward} $f_{k, t}$ is defined as
\begin{equation}
f_{k, t} \myeq
\begin{cases}
r  \qquad &\mbox{if channel } k \in X_t \text{ and misuse is  detected}, \\
0  \qquad & \mbox{otherwise,}
\end{cases}
\label{eq: reward r}
\end{equation}
where the \textit{unit reward} $r$ is assumed to be scaled and satisfies $rl \leq 1$  for the purpose of mathematical analysis. 
%
Note that the probability of at least one MU being detected on monitored channel $k$ is determined by the number of radios on that channel $M_k$, the detection probability $p_d(M_k)$ and the action of MUs $\{P^m_{t,k}\}^M_{m = 1}$.
We denote the detection probability on channel $k$ by adopting strategy $X_t$ at timeslot $t$ as $P_d\left(a_{X_t k}, p_d(M_k), \{P_{t,k}^{m}\}^M_{m=1}\right)$
which is assumed to be a non-decreasing function in $a_{X_t k}$, $p_d(M_k)$, and $P_{t,k}^m$.
Thus, the channel reward $f_{k, t}$ is $r$ with probability $P_d\left(a_{X_t k}, p_d(M_k), \{P_{t,k}^{m}\}^M_{m=1}\right)$.
Note that the knowledge of this probability is not required.

Assume the monitor follows the strategy sequence $X_1, X_2, \ldots, X_T$  generated by any monitoring Algorithm $\mathbb{A}$.
At the end of timeslot $T$, the \textit{cumulative strategy reward} is 
\begin{equation}
G_\mathbb{A} \myeq \sum_{t = 1}^{T} g_{X_t, t}.
\label{cummu st reward}
\end{equation}
Meanwhile, the monitor incurs \textit{cumulative switching cost} 
\begin{equation}
L_\mathbb{A} \myeq \sum_{t = 1}^{T} c (X_{t-1}, X_t).
\label{cumu sc}
\end{equation}
Thus, the \textit{utility} of the monitor by choosing Algorithm $\mathbb{A}$ is 
\begin{equation}
U_\mathbb{A} = G_\mathbb{A} - L_\mathbb{A}.
\label{eq: utility A}
\end{equation}

To measure the performance of Algorithm $\mathbb{A}$, we use a special case of the worst-case regret, \textit{weak regret}~\cite{auer2002nonstochastic}, as the metric.

The weak regret of Algorithm $\mathbb{A}$ is the difference between the utility by using \textit{best fixed algorithm} and the actual utility by using Algorithm $\mathbb{A}$. 
A fixed algorithm chooses only  one strategy for all timeslots and never switches. The best fixed algorithm is the one resulting in the highest utility among all fixed algorithms. 
The strategy chosen in best fixed algorithm is called the \textit{best strategy}, denoted by $s_{best}$. 
Formally,
$
s_{best}  \myeq  \argmax_{s \in \mathcal{S}}  \left(\sum_{t = 1}^{T} g_{s, t} - c_0\right),
$
and the utility by using the best fixed algorithm is
\begin{equation}
U_{best}  \myeq  G_{best} - L_{best},
\label{eq: utility best}
\end{equation}
where $G_{best} = \max_{s \in \mathcal{S}} \sum_{t = 1}^{T} g_{s, t} $ and $L_{best} = c_0$ since the switch only happens at the first timeslot.
%
Note that the best strategy can only be found in hindsight. 

Now we can define the weak regret of Algorithm $\mathbb{A}$ as
\begin{equation}
\label{eq: weak regret}
R_\mathbb{A} \myeq U_{best} - U_\mathbb{A}.
\end{equation}

\noindent \textbf{Problem Statement:} Given $K$ channels, time horizon $T$, and a monitor with $l$ radios, our objective is to design online spectrum monitoring algorithms such that the weak regret is minimized, in the presence of different adversaries. We make no assumption on the knowledge of the probability functions $p_d(M_k)$ and $P_d\left(a_{X_t k}, p_d(M_k), \{P_{t,k}^{m}\}^M_{m=1}\right)$. In addition, the attack distribution $\mathbf{P}^m_{t}$ and the reward of choosing a strategy are unknown a priori. 

Therefore, any desired algorithm needs to balance not only the trade-off between exploration and exploitation, but also that between strategy rewards and switching costs. This is a very challenging problem.

	\section{Spectrum Monitoring Framework}

In this section, we design SpecWatch, a spectrum monitoring framework, based on the batching version of exponential-weight algorithm for exploration and exploitation, where the idea of batching is inspired by \cite{arora2012online}. 

To control the trade-off between the reward and the switching cost,  we group all the timeslots into consecutive and disjoint batches.
Within each batch, we stick to the same strategy to avoid the switching cost. Between batches, we reselect a strategy to gain higher rewards. 
A smaller batch size may result in larger reward but larger switching cost, while a bigger batch size may result in smaller switching cost but smaller reward.

%
%
\begin{framework}
	\caption{}
	\label{SpecWatch}
	\textbf{Parameter:} {$\tau \in [1, T]$} 
	
	\textbf{Initialization:} {
		$\mathsf{w}_{s,1} \leftarrow 1$ for all $s \in \mathcal{S}$,
		$J \leftarrow \lceil T / \tau \rceil$.
		}
	
	\For {$j \leftarrow 1, \ldots, J$}
	{
		Calculate the strategy probability $\mathsf{p}_{s, j}$ for all $s \in \mathcal{S}$ 
		
		Choose strategy $Z_j \in \mathcal{S}$ randomly accordingly to the probability distribution $\mathsf{p}_{1,j}, \ldots, \mathsf{p}_{S,j}$ and incur switching cost $c (Z_{j-1}, Z_j)$.
		
		Monitor using $Z_j$ for $\tau$ timeslots, i.e., $X_{[j] + i} \leftarrow Z_j$ for $1 \leq i \leq \tau $, and record the reward of each monitored channel, $f_{k, [j] + i}$ for all $k \in Z_j$, $1 \leq i \leq \tau$.
		
		Update strategy weight $\mathsf{w}_{s, j+1}$ for all $s \in \mathcal{S}$.
	}
\end{framework}
The design details of {\alg} are illustrated above. 
We first set the initial weight of each strategy to be 1. 
Given the batch size $\tau$,  the timeslots $1, 2, \ldots, T$ are divided into $J = \lceil T / \tau \rceil$ consecutive and disjoint batches.
Let $[j] = (j-1)\tau$ for $1 \leq j \leq J$. Then the $j$-th batch starts from timeslot $[j] + 1$ and ends at timeslot $[j] + \tau$ as shown in Fig.~\ref{fig:batch}.
At the beginning of each batch, we calculate the strategy probabilities according to strategy weights. Then we randomly select a strategy based on the probability distribution. During the whole batch, the chosen strategy remains the same. At the end of each batch, strategy weights are updated according to the strategy rewards. 

\begin{figure}[ht!] 
	\centering
	\includegraphics[width = \textwidth]{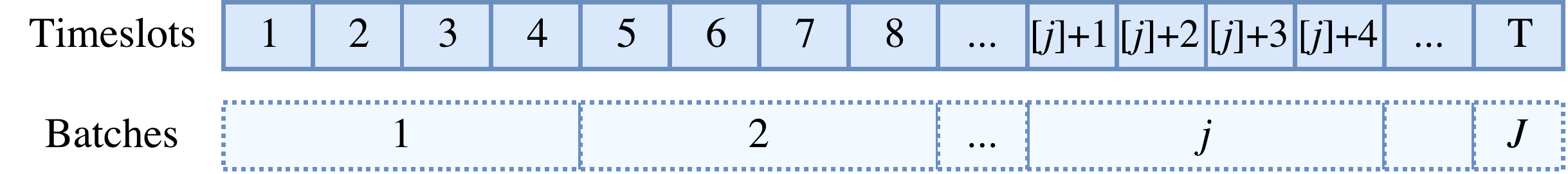}	
	\caption{Example of batching timeslots with batch size $\tau=4$}
	\label{fig:batch}
\end{figure}

%
%

\setlength{\extrarowheight}{1pt}
\begin{table}[h!]
	\centering
	\caption{\textsc{Main Notations}}
	\label{tab:notation}
	\begin{tabular}{ l  l}
		\toprule 
		Notation & Meaning
		\\
		\midrule
		$\mathcal{K}$ 
		&
		set of all channels
		\\
		$\mathcal{T}$ 
		&
		set of all timeslots
		\\
		$\mathcal{M}$
		&
		set of malicious users
		\\
		$\mathcal{S}$
		&
		set of all strategies
		\\
		$\mathcal{C}$ & covering strategy set		
		\\
		$C_k$ & number of strategies in $\mathcal{C}$ that contains channel $k$ 
		\\	
		$l$ 
		&
		number of radios in the monitor
		\\
		$r$
		& unit channel reward  which satisfies $rl \leq 1$
		\\
		$c (X_{t-1}, X_t)$
		&
		switching cost from strategy $X_{t-1}$ to $X_t$
		\\
		$G_\mathbb{A} $
		&
		cumulative strategy rewards of Algorithm $\mathbb{A}$
		\\
		$L_\mathbb{A} $
		&
		cumulative switching costs of Algorithm $\mathbb{A}$
		\\
		$U_\mathbb{A}$
		&
		utility of Algorithm $\mathbb{A}$ 
		\\
		$R_\mathbb{A}$
		&
		weak regret of Algorithm $\mathbb{A}$
		\\
		$\tau$ & parameter to determine the batch size
		\\
		$J$ & number of batches
		\\
		$\gamma$ & parameter to calculate strategy probabilities
		\\
		$\eta$ & parameter to update weights
		\\
		$\beta$ & parameter to calculate channel scores
		\\
		\arrayrulecolor{gray}\midrule
		$f_{k, t}$  &  channel reward of channel $k$ in timeslot $t$
		\\ 
		$g_{s, t}$  & strategy reward of strategy $s$ in timeslot $t$
		\\
		\arrayrulecolor{gray}\midrule
		$\bar{\mathsf{f}}_{k, j}$
		& average channel reward of channel $k$ in batch $j$
		\\ 
		$\bar{\mathsf{g}}_{s, j}$
		& average strategy reward of strategy $s$ in batch $j$
		\\
		$\bar{\mathsf{f}}^{\prime}_{k, j}$
		& average channel score of channel $k$ in batch $j$
		\\ 
		$\bar{\mathsf{g}}^{\prime}_{s, j}$
		& average strategy score of strategy $s$ in batch $j$
		\\
		$\mathsf{h}_{k, j}$ 
		& channel weight of channel $k$ in batch $j$
		\\ 
		$\mathsf{w}_{s, j}$
		& strategy weight of strategy $s$ in batch $j$
		\\ 
		$\mathsf{W}_j $
		& total weight of all strategies in batch $j$
		\\ 
		$\mathsf{q}_{k, j}$
		& channel probability of channel $k$ in batch $j$
		\\
		$\mathsf{p}_{s, j}$
		& strategy probability of strategy $s$ in batch $j$		
		\\
		\bottomrule
	\end{tabular}
\end{table}


In the following sections, we design two effective online spectrum monitoring algorithms, {\algsfull}. These two algorithms are designed based on {\alg} and {\alg}$^{+}$ in our paper~\cite{li2016specwatch}. We rename  {\alg} and {\alg}$^{+}$ to {\algA} and {\algC}, respectively. We introduce {\algB} in this paper and discard {\algA} because {\algB} has better theoretical performance than {\algA}. 

The main notations are summarized in Table~\ref{tab:notation}.

Different algorithms use different equations to calculate the strategy probabilities (Line 3) and update the strategy weights (Line 6).
With carefully chosen parameters, their theoretical performances are shown in Table~\ref{tab: algo perf}. 
In summary, {\algA} and {\algB} bound the \textit{expected} weak regrets to $O(T^{2/3})$; while {\algC} bounds the \textit{actual} weak regret to $O(T^{2/3})$ with user-defined probability. 


\def\arraystretch{2}
\begin{table}[h!]
	\centering
	\caption{\textsc{Algorithm Comparison}}
	\label{tab: algo perf}
	\begin{tabular}{ c c c}
		\toprule 
		Algorithm & Core Functions & Theoretical Bound of Weak Regret
		\\
		\midrule
		\multirow{ 2}{*}{\algA} & $\mathsf{p}_{s, j} = (1-\gamma)\frac{\mathsf{w}_{s, j}}{\mathsf{W}_j} + \frac{\gamma}{S}$ &  
		\multirow{ 2}{*}{$\Expect\left[R_{{\algAabr}}\right] 
			\leq
			3\left((e-1)S\ln S\right)^{\frac{1}{3}} T^{\frac{2}{3}}$}
		\\
		&$\mathsf{w}_{s, j+1} = \mathsf{w}_{s, j} \exp \left( \frac{\gamma}{S} \sum_{k \in s} \frac{\frac{1}{\tau} \sum_{i = 1}^{\tau} f_{k, [j] + i}}{\sum_{s: k \in s}\mathsf{p}_{s,j}}\right) $& 
		\\ 
		\arrayrulecolor{black!30}\midrule
		\multirow{ 2}{*}{{\algB}} & $\mathsf{p}_{s, j} = \frac{\mathsf{w}_{s, j}}{\mathsf{W}_j}$ 
		& 
		\multirow{ 2}{*}{$
		\Expect\left[R_{{\algBabr}}\right] 
		\leq
		3\left(\frac{1}{2} S\ln S\right)^{\frac{1}{3}} T^{\frac{2}{3}}
		$}
	\\
	&$\mathsf{w}_{s, j+1} = \mathsf{w}_{s, j} \exp \left( - \eta \sum_{k \in s} \frac{\mathds{1}_{k \in Z_j}(1/l - \frac{1}{\tau} \sum_{i = 1}^{\tau}  f_{k, [j] + i})}{\sum_{s: k \in s}\mathsf{p}_{s,j}} \right)$&
		\\ 
		\midrule
		\multirow{ 2}{*}{{\algC}}
		  & $\mathsf{p}_{s, j} = 
		(1-\gamma)\frac{\mathsf{w}_{s, j}}{\mathsf{W}_j} + 
		\frac{\gamma}{C}\mathds{1}_{s \in \mathcal{C}}$ 
		&
		\multirow{ 2}{*}{$
		\Probability \left[ 
		R_{\algCabr}
		\leq
		2 \left(4\sqrt{lC\ln S} + 2\sqrt{lK\ln\frac{K}{\delta}}\right)^{\frac{2}{3}} T^{\frac{2}{3}}
		\right] 
		\leq
		1 - \delta
		$}
	\\
	&$\mathsf{w}_{s, j+1} = \mathsf{w}_{s, j} \exp \left(  \eta \sum_{k \in s} 
	\frac{
		\frac{1}{\tau} 
		\sum_{i = 1}^{\tau} f_{k, [j] + i} + \beta}{(1-\gamma)\frac{\sum_{s: k \in s}\mathsf{w}_{s, j}}
		{\mathsf{W}_j} + \frac{\gamma C_k}{C}} \right)$&
		\\
		\arrayrulecolor{black}\bottomrule		
	\end{tabular}
\end{table}	


	\section{Spectrum Monitoring Algorithm with Bounded Expected Weak Regret}
In this section, we design one spectrum monitoring algorithm, {\algB}, whose expected weak regret is theoretically bounded.
	
	
%
%
\subsection{{\algB}}\label{ssec: algB}

\subsubsection{Algorithm Design}
%
Compared to {\algA}, the input parameters of {\algB} are $\tau$ and $\eta$, where $\tau$ determines the batch size and $\eta$ is used to calculate strategy weights; the strategy probabilities in {\algB} are proportion to the strategy weights only, and the strategy weights are updated smaller for the next timeslot.

\textbf{Calculating Strategy Probability $\mathsf{p}_{s, j}$.}
The probability of choosing strategy $s \in \mathcal{S}$ is calculated using
\begin{equation}
\label{eq: strategy prob loss}
\mathsf{p}_{s, j} = \frac{\mathsf{w}_{s, j}}{\mathsf{W}_j} ,
\end{equation}
where $\mathsf{w}_{s, j}$ is the strategy weight of strategy $s$ in batch $j$, and $\mathsf{W}_j = \sum_{s \in \mathcal{S}} \mathsf{w}_{s, j}$. 
Recall that in {\algA}, we use the weighted average of two terms,
$
\mathsf{p}_{s, j} = (1-\gamma)\frac{\mathsf{w}_{s, j}}{\mathsf{W}_j} + \frac{\gamma}{S},
$
where the first is to exploit strategies with good reward history, and the second guarantees the exploration over all strategies. $\gamma$ controls the balance between them. 
Now we only have one term, but the balance still exists. This is because a strategy not chosen before is guaranteed to have a higher weight, due to the way we update the strategy weight as discussed below. 


\textbf{Choosing Strategy $Z_j$ and Monitoring Spectrum.}
We select a strategy $Z_j \in \mathcal{S}$ randomly according to the probabilities calculated above. The monitor keeps using $Z_j$ for all $\tau$ timeslots in batch $j$, i.e., $X_{[j] + i} = Z_j$ for $1 \leq i \leq \tau $. Therefore, the monitor only incurs switching cost $c (Z_{j-1}, Z_j)$ once for the whole batch $j$. 

Depending on the misuse behavior of MUs (discussed in Section~\ref{sec: model}), the monitor receives rewards on monitored channels accordingly. The monitor keeps records of $f_{k, [j] + i}$ for all  $k \in Z_j$ and $1 \leq i \leq \tau$. The strategy reward gained by the monitor is the summation of rewards over all monitored channels.

\textbf{Updating Strategy Weight $w_{s, j+1}$.}
At the end of each batch, we update strategy weights in the following steps.

First, we calculate the average channel reward of each channel $k \in \mathcal{K}$ in batch $j$,
\begin{equation}
\label{eq: average channel reward}
\bar{\mathsf{f}}_{k, j}= \frac{1}{\tau}\sum_{i = 1}^{\tau} f_{k, [j] + i}.
\end{equation}
By \eqref{eq: reward r}, $ \bar{\mathsf{f}}_{k, j} \in [0, r]$. 
We also calculate the probability of choosing channel $k \in Z_j$ by summing up the probabilities of strategies containing that channel,
\begin{equation}
\label{eq: channel probability}
\mathsf{q}_{k, j} = \sum_{s: k \in s} \mathsf{p}_{s, j}.
\end{equation}
where $s: k \in s$ denotes any strategy $s$ containing channel $k$.
Based on~\eqref{eq: average channel reward} and~\eqref{eq: channel probability}, we calculate the \textit{average channel score},
\begin{equation}
\label{eq: virtual channel reward}
\bar{\mathsf{f}}^{\prime}_{k, j} = \frac{1/l -\bar{\mathsf{f}}_{k, j}}{\mathsf{q}_{k,j}}\mathds{1}_{k \in Z_j}, 
\end{equation}
where $\mathds{1}_{k \in Z_j}$ is an indicator function which has the value 1 if $k \in Z_j$; 0 otherwise.
This score is always nonnegative since $\bar{\mathsf{f}}_{k, j} \leq r \leq 1/l$.
Then we update each \textit{channel weight} by 
\begin{equation}
\label{eq: channel weight}
\mathsf{h}_{k, j+1} = \mathsf{h}_{k, j} 
\exp 
\left(
- \eta \bar{\mathsf{f}}^{\prime}_{k, j}
\right),
\end{equation}
where  $\mathsf{h}_{k, 1} = 1$ for all $k \in \mathcal{K}$.
Note that the channel weight is non-increasing from batch to batch. 

Finally, we give the formal definition of \textit{strategy weight}, which is defined as 
\begin{equation}
\label{eq: strategy weight}
\mathsf{w}_{s, j} \myeq  \prod_{k \in s} \mathsf{h}_{k, j}.
\end{equation}
Combining~\eqref{eq: channel weight} and~\eqref{eq: strategy weight}, we can directly update the strategy weight for each $s \in \mathcal{S}$ by
\begin{equation}
\label{eq: weight update}			
\mathsf{w}_{s, j+1} = \mathsf{w}_{s, j} \exp ( - \eta \bar{\mathsf{g}}^{\prime}_{s, j}),
\end{equation}
where $\bar{\mathsf{g}}^{\prime}_{s, j}$ is the \textit{average strategy score} for each  $s \in \mathcal{S}$, i.e.,
\begin{equation}
	\bar{\mathsf{g}}^{\prime}_{s, j} = \sum_{k \in s} \bar{\mathsf{f}}^{\prime}_{k, j}.
\end{equation} 
Note that by combining~\eqref{eq: average channel reward},~\eqref{eq: channel probability} and~\eqref{eq: virtual channel reward}, we can directly calculate $\bar{\mathsf{g}}^{\prime}_{s, j}$ directly by
\begin{equation}
\label{eq: virtual strategy reward}
\bar{\mathsf{g}}^{\prime}_{s, j} 
= \sum_{k \in s} \bar{\mathsf{f}}^{\prime}_{k, j} 
= \sum_{k \in s} \frac{\mathds{1}_{k \in Z_j}(1/l - \bar{\mathsf{f}}_{k, j})}{\mathsf{q}_{k,j}}
= \sum_{k \in s} \frac{\mathds{1}_{k \in Z_j}(1/l - \frac{1}{\tau} \sum_{i = 1}^{\tau} f_{k, [j] + i})}{\sum_{s: k \in s}\mathsf{p}_{s,j}}.
\end{equation}

\noindent \textit{Remark.} We do not update strategy weights based on strategy rewards, but instead calculate channel weights~\eqref{eq: channel weight} first. This is because the rewards of monitored channels provide useful information on those \textit{unchosen strategies} containing these channels.

\subsubsection{Performance Analysis}\label{sssec:SW- analysis}
To analyze the performance of {\algB}, we first bound the difference between the reward gained by {\algB} and that by the best fixed algorithm (Lemma~\ref{lemma3}), and then prove the upper bound of the expected weak regret (Theorem~\ref{thm3}). For a better understanding of Theorem~\ref{thm3}, we present a specific bound obtained by a particular choice of parameters $\eta$ and $\tau$ (Corollary~\ref{coro3}).

Recalling \eqref{eq: utility A}, \eqref{eq: utility best} and \eqref{eq: weak regret}, the   weak regret of {\algB} is
\begin{equation}
R_{{\algBabr}}
=
(G_{best} - L_{best}) - (G_{{\algBabr}} - L_{{\algBabr}}) .
\label{eq: weak regret new}
\end{equation}

Since the best fixed algorithm never switches the strategies, and {\algB} only switches between batches for at most $J$ times,  their cumulative switching costs are
\vspace{-.1in}
\begin{equation*}
L_{best} = c_0 \in [0, 1] \quad \text{ and } \quad 
L_{{\algBabr}} = \sum_{j = 1}^{J} c (Z_{j-1}, Z_j) \leq J.
\end{equation*}
Thus, we have
\begin{equation}
L_{{\algBabr}} - L_{best} \leq J.
\label{eq: switching cost bound}
\end{equation}

Now it suffices to only consider the difference between rewards.
An important observation is that we group the timeslots into batches,  calculate the strategy probabilities only at the beginning of each batch, and use the average value of entire batch to update weight. Therefore, each batch can be considered as a round in conventional MAB. With this consideration, we introduce the notations below for our proofs,

\begin{equation}
\bar{\mathsf{g}}_{s, j} 
= \sum_{k \in s}\bar{\mathsf{f}}_{k, j} , \quad 
	\bar{\mathsf{G}}_{{\algBabr}} \myeq \sum_{j = 1}^{J}\bar{\mathsf{g}}_{Z_j, j}  \quad \text{ and } 
	\quad 
	\bar{\mathsf{G}}_{best} \myeq \max_{s \in \mathcal{S}}\sum_{j = 1}^{J}\bar{\mathsf{g}}_{s, j}.
	\label{eq: cumu str reward}
\end{equation}

Note that
\begin{align*}
\bar{\mathsf{g}}_{s, j} 
&= \sum_{k \in s}\bar{\mathsf{f}}_{k, j} 
= \frac{1}{\tau}\sum_{k \in s}\mathsf{f}_{k, j} 
= \frac{1}{\tau} \sum_{k \in s} \sum_{i = 1}^{\tau}f_{k,[j] +r} 
\\
&= \frac{1}{\tau}  \sum_{i = 1}^{\tau} \sum_{k \in s} f_{k,[j] +r}
= \frac{1}{\tau} \sum_{i = 1}^{\tau} g_{s, [j] +r}.
\end{align*}
Thus we have
\begin{equation}
G_{{\algBabr}} = \sum_{t = 1}^{T}g_{X_t, t} = \tau \sum_{j = 1}^{J}\bar{\mathsf{g}}_{s, j} = \tau \bar{\mathsf{G}}_{{\algBabr}}.
\label{eq: algo batch}
\end{equation}
Similarly,
\begin{equation}	
G_{best} = \tau \bar{\mathsf{G}}_{best}.
\label{eq: best batch}
\end{equation}

%
%
We now provide the bound of the expected difference between $\bar{\mathsf{G}}_{{\algBabr}}$ and $\bar{\mathsf{G}}_{best}$. 
\begin{lemma}
	\label{lemma3}
	For any type of adversaries, any $T > 0$, and any $\eta > 0$, we have
	\begin{equation}
	\Expect \left[
		\bar{\mathsf{G}}_{best} 
		-
	\bar{\mathsf{G}}_{{\algBabr}}
	\right]
	\leq
	\frac{\eta J S}{2}  + \frac{ \ln S}{\eta},
	\label{lemma3eq}
	\end{equation}
	where $\ln$ is the natural logarithm function.
\end{lemma}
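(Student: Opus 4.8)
The plan is to run the exponential-weights potential argument at the level of strategies, treating each batch $j$ as a single round of a combinatorial bandit and treating the nonnegative average strategy scores $\bar{\mathsf{g}}^{\prime}_{s,j}$ as importance-weighted loss estimates. I introduce the potential $\mathsf{W}_j = \sum_{s\in\mathcal{S}}\mathsf{w}_{s,j}$ and note $\mathsf{W}_1 = S$ because every initial weight is $1$. The whole proof will revolve around sandwiching the telescoping quantity $\ln(\mathsf{W}_{J+1}/\mathsf{W}_1)$ between an upper bound coming from the weight update and a lower bound coming from any single fixed strategy, and then passing to expectations to turn scores back into rewards.

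First I would upper bound each per-batch ratio. Using $\mathsf{p}_{s,j}=\mathsf{w}_{s,j}/\mathsf{W}_j$ from \eqref{eq: strategy prob loss} and the update \eqref{eq: weight update}, I write $\mathsf{W}_{j+1}/\mathsf{W}_j = \sum_{s}\mathsf{p}_{s,j}\exp(-\eta\bar{\mathsf{g}}^{\prime}_{s,j})$. Since each score is nonnegative (noted after \eqref{eq: virtual channel reward}), the elementary inequality $e^{-x}\le 1-x+x^2/2$ for $x\ge 0$ together with $\ln(1+y)\le y$ gives $\ln(\mathsf{W}_{j+1}/\mathsf{W}_j)\le -\eta\sum_s\mathsf{p}_{s,j}\bar{\mathsf{g}}^{\prime}_{s,j}+\tfrac{\eta^2}{2}\sum_s\mathsf{p}_{s,j}(\bar{\mathsf{g}}^{\prime}_{s,j})^2$. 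Summing over $j$ and pairing with the lower bound $\ln(\mathsf{W}_{J+1}/\mathsf{W}_1)\ge -\eta\sum_j\bar{\mathsf{g}}^{\prime}_{s,j}-\ln S$ (obtained by keeping only the single term for $s$ in $\mathsf{W}_{J+1}$, valid for every fixed $s$ simultaneously) yields the deterministic inequality
\[
\sum_{j}\sum_{s}\mathsf{p}_{s,j}\bar{\mathsf{g}}^{\prime}_{s,j}-\min_{s}\sum_{j}\bar{\mathsf{g}}^{\prime}_{s,j}
\le \frac{\ln S}{\eta}+\frac{\eta}{2}\sum_{j}\sum_{s}\mathsf{p}_{s,j}(\bar{\mathsf{g}}^{\prime}_{s,j})^2 .
\]

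Next I would take expectations and exploit unbiasedness to convert scores back into rewards. Conditioning on the history $\mathcal{H}_j$ up to the start of batch $j$ fixes $\mathsf{q}_{k,j}$ and the realized $\bar{\mathsf{f}}_{k,j}$, leaving only $Z_j$ random with $\Probability[k\in Z_j\mid\mathcal{H}_j]=\mathsf{q}_{k,j}$ by \eqref{eq: channel probability}; hence $\Expect[\bar{\mathsf{f}}^{\prime}_{k,j}\mid\mathcal{H}_j]=1/l-\bar{\mathsf{f}}_{k,j}$ and, summing over the $l$ channels of a strategy, $\Expect[\bar{\mathsf{g}}^{\prime}_{s,j}\mid\mathcal{H}_j]=1-\bar{\mathsf{g}}_{s,j}$. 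The first term therefore telescopes in expectation to $J-\Expect[\bar{\mathsf{G}}_{{\algBabr}}]$, while Jensen's inequality $\Expect[\min_s(\cdot)]\le\min_s\Expect[\cdot]$ bounds the subtracted minimum by $J-\bar{\mathsf{G}}_{best}$; the constant $J$ cancels and the left side becomes exactly $\Expect[\bar{\mathsf{G}}_{best}-\bar{\mathsf{G}}_{{\algBabr}}]$. This cancellation is clean for the oblivious adversary, where the reward sequence (and thus $s_{best}$) is deterministic; for the adaptive adversary the same per-batch conditional unbiasedness holds because the adversary has already committed its batch-$j$ attack, and the only extra care needed is in handling the data-dependent comparator, which I would treat by the standard device of applying the deterministic inequality to each fixed strategy.

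The step I expect to be the main obstacle is controlling the second-moment term $\Expect[\sum_s\mathsf{p}_{s,j}(\bar{\mathsf{g}}^{\prime}_{s,j})^2\mid\mathcal{H}_j]$, which is delicate precisely because strategies overlap in channels: $\bar{\mathsf{g}}^{\prime}_{s,j}$ is nonzero for every strategy sharing a channel with $Z_j$, so squaring produces many cross-terms. I would tame it by applying Cauchy--Schwarz across the $l$ channels of a strategy, $(\bar{\mathsf{g}}^{\prime}_{s,j})^2\le l\sum_{k\in s}(\bar{\mathsf{f}}^{\prime}_{k,j})^2$, then swapping the order of summation through the identity $\sum_s\mathsf{p}_{s,j}\sum_{k\in s}(\cdot)=\sum_k\mathsf{q}_{k,j}(\cdot)$, and finally using $\Expect[(\bar{\mathsf{f}}^{\prime}_{k,j})^2\mid\mathcal{H}_j]=(1/l-\bar{\mathsf{f}}_{k,j})^2/\mathsf{q}_{k,j}\le 1/(l^2\mathsf{q}_{k,j})$. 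This collapses the conditional expectation to $l\cdot\sum_{k}1/l^2=K/l$, which is at most $S=\binom{K}{l}$ since $l\binom{K}{l}=K\binom{K-1}{l-1}\ge K$. Substituting this bound for each of the $J$ batches produces the $\tfrac{\eta JS}{2}$ term, and adding $\tfrac{\ln S}{\eta}$ gives \eqref{lemma3eq}. The only remaining bookkeeping is verifying the nonnegativity used for the $e^{-x}$ inequality and the overlapping-channel sum swaps, both of which are routine.
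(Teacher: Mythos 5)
Your proposal follows essentially the same route as the paper's proof: the exponential-weights potential argument sandwiching $\ln(\mathsf{W}_{J+1}/\mathsf{W}_1)$, the inequality $e^{-x}\le 1-x+\tfrac{1}{2}x^2$ for $x\ge 0$, the single-fixed-strategy lower bound contributing $\ln S$, the Cauchy--Schwarz step across the $l$ channels of a strategy, and the importance-weighting identity $\Expect\left[\bar{\mathsf{g}}^{\prime}_{s,j}\mid \mathcal{H}_j\right]=1-\bar{\mathsf{g}}_{s,j}$ to pass to expectations. The only (immaterial) difference is the second-moment term, which you bound in conditional expectation by $K/l\le S$ whereas the paper bounds it pointwise by $\bar{\mathsf{g}}^{\prime}_{Z_j,j}$ before taking expectations; both yield the $\tfrac{\eta JS}{2}$ term, and you are in fact more explicit than the paper about the max-versus-expectation subtlety in the comparator.
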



\begin{proof}
	The proof of this lemma is based on \cite[Theorem 3.1]{bubeck2012regret} with necessary modifications and extensions.
	
	First we analyze $\frac{\mathsf{W}_{j+1}}{\mathsf{W}_{j}}$.
	For any sequences $Z_1, \ldots, Z_j$ generated by {\algB}, we have
	\begin{align*} 
		\frac{\mathsf{W}_{j+1}}{\mathsf{W}_{j}} 
		=~& \sum_{s \in \mathcal{S}}
		\frac{\mathsf{w}_{s, j+1}}{\mathsf{W}_j}
		\\
		=~& \sum_{s \in \mathcal{S}}
		\frac{\mathsf{w}_{s, j}}{\mathsf{W}_j}
		\exp\left(- \eta \bar{\mathsf{g}}^{\prime}_{s,j} \right)
		\\
		=~& \sum_{s \in \mathcal{S}}
		\mathsf{p}_{s, j} 
		\exp\left(-\eta \bar{\mathsf{g}}^{\prime}_{s,j}\right)
		\stepcounter{equation}\tag{\theequation}\label{eq: w to p loss}
		\\
		\leq~& \sum_{s \in \mathcal{S}}
		\mathsf{p}_{s, j} 
		\left(1 -\eta\bar{\mathsf{g}}^{\prime}_{s,j} 
		+ \frac{1}{2} \left(\eta\bar{\mathsf{g}}^{\prime}_{s,j}\right)^2\right)
		\stepcounter{equation}\tag{\theequation}\label{eq: e-x}
		\\
		\leq~& 1
		- \eta \sum_{s \in \mathcal{S}} 
		\left(  \mathsf{p}_{s, j} \bar{\mathsf{g}}^{\prime}_{s,j} \right)
		+ \frac{\eta ^ 2}{2} 
		\sum_{s \in \mathcal{S}}
		\mathsf{p}_{s, j}
		\left(\bar{\mathsf{g}}^{\prime}_{s,j}\right)^2 ,
		\stepcounter{equation}\tag{\theequation}\label{eq: delete p loss}
	\end{align*}
	where \eqref{eq: w to p loss} uses the definition of $\mathsf{p}_{s, j}$, and \eqref{eq: e-x} holds by the fact that for $x \geq 0$, $e^{-x} \leq 1 - x + \frac{1}{2}x^2$.

	Next, we bound~\eqref{eq: delete p loss} by bounding $	\sum_{s \in \mathcal{S}}
	\mathsf{p}_{s, j}\bar{\mathsf{g}}^{\prime}_{s,j}$ and $\sum_{s \in \mathcal{S}}
	\mathsf{p}_{s, j}
	\left(\bar{\mathsf{g}}^{\prime}_{s,j}\right)^2$.
	\begin{align*}\nonumber
	\sum_{s \in \mathcal{S}}
	\mathsf{p}_{s, j}\bar{\mathsf{g}}^{\prime}_{s,j}
	& = \sum_{s \in \mathcal{S}}
	\left(
	\mathsf{p}_{s, j}
	\sum_{k \in s}
	\bar{\mathsf{f}}^{\prime}_{k,j}
	\right)
	\\
	& = \sum_{k \in \mathcal{K}}
	\left(
	\bar{\mathsf{f}}^{\prime}_{k,j}
	\sum_{s: k \in s}
	\mathsf{p}_{s, j}
	\right)
	\\
	& = \sum_{k \in \mathcal{K}}
	\left(
	\bar{\mathsf{f}}^{\prime}_{k,j}
	\mathsf{q}_{k, j}
	\right)
	\\
	& = \sum_{k \in \mathcal{K}}
	\left( 1/l - \bar{\mathsf{f}}_{k,j} \right)
	\\
	& = \sum_{k \in Z_j}
	\left( 1/l - \bar{\mathsf{f}}_{k,j} \right) 
	+ \sum_{k \in \mathcal{K}\backslash Z_j}
	\left( 1/l - \bar{\mathsf{f}}_{k,j} \right)
	\\
	& \geq 1 - \bar{\mathsf{g}}_{Z_j,j},
	\stepcounter{equation}\tag{\theequation}\label{eq: delete K-Zj}
	\end{align*}
	
%
			\begin{align*}
			\sum_{s \in \mathcal{S}}
			\mathsf{p}_{s, j}\left(\bar{\mathsf{g}}^{\prime}_{s,j}\right)^2
			& = \sum_{s \in \mathcal{S}}
			\left(
			\mathsf{p}_{s, j}
			\left(
			\sum_{k \in s}
			\bar{\mathsf{f}}^{\prime}_{k,j}
			\right)^2
			\right)
			\\
			& \leq \sum_{s \in \mathcal{S}}
			\left(
			\mathsf{p}_{s, j}
			\cdot
			l
			\cdot
			\sum_{k \in s}
			\left(
			\bar{\mathsf{f}}^{\prime}_{k,j}
			\right)^2
			\right)
			\stepcounter{equation}\tag{\theequation}\label{eq: mean inequality}
			\\
			& = l \cdot
			\sum_{k \in \mathcal{K}}
			\left(
			\left(
			\bar{\mathsf{f}}^{\prime}_{k,j}
			\right)^2
			\sum_{s: k \in s}
			\mathsf{p}_{s, j}
			\right)
			\\
			& = l \cdot
			\sum_{k \in \mathcal{K}}
			\left(
			\left(
			\bar{\mathsf{f}}^{\prime}_{k,j}
			\right)^2
			\mathsf{q}_{k, j}
			\right)
			\\
			& = l \cdot
			\sum_{k \in \mathcal{K}}
			\left(
			\bar{\mathsf{f}}^{\prime}_{k,j}
			\cdot
			\frac{\left( 1/l - \bar{\mathsf{f}}_{k,j} \right)}{\mathsf{q}_{k, j}}
			\cdot
			\mathsf{q}_{k, j}
			\right)
			\\
			& \leq l \cdot
			\sum_{k \in \mathcal{K}}
			\left(
			\bar{\mathsf{f}}^{\prime}_{k,j}
			\cdot
			\frac{1}{f}
			\right)
			\\
			& = 
			\sum_{k \in \mathcal{K}}
			\bar{\mathsf{f}}^{\prime}_{k,j}
			\\
			& = \sum_{k \in Z_j}
			\bar{\mathsf{f}}^{\prime}_{k,j} 
			+ \sum_{k \in \mathcal{K}\backslash Z_j}
			\bar{\mathsf{f}}^{\prime}_{k,j}
			\\
			& = \bar{\mathsf{g}}^{\prime}_{Z_j,j}
			,
			\end{align*}
		where \eqref{eq: mean inequality} holds as a special case of the Cauchy-Schwarz inequality,
		

	Thus, we have
	\begin{align*}\nonumber
		\sum_{s \in \mathcal{S}}
		\mathsf{p}_{s, j}\bar{\mathsf{g}}^{\prime}_{s,j}
		\geq \bar{\mathsf{g}}_{Z_j,j}
		\text{ and }
		\sum_{s \in \mathcal{S}}
		\mathsf{p}_{s, j}\left(\bar{\mathsf{g}}^{\prime}_{s,j}\right)^2
		\leq
		\bar{\mathsf{g}}^{\prime}_{Z_j,j}
		\stepcounter{equation}\tag{\theequation}\label{eq: delete K-Zj again loss}.
	\end{align*}
	
	Combining~\eqref{eq: delete p loss}
	and~\eqref{eq: delete K-Zj again loss}, we have
	\begin{equation}
	\frac{\mathsf{W}_{j+1}}{\mathsf{W}_{j}} 
	\leq 
	1 
	- \eta
	(1-\bar{\mathsf{g}}_{Z_j,j} )
	+ \frac{\eta^2}{2}
	\bar{\mathsf{g}}^{\prime}_{Z_j,j}.
	\end{equation}
	
	Taking the log of both sides and using $1+x \leq e^x$ gives
	\begin{equation}
	\ln \frac{\mathsf{W}_{j+1}}{\mathsf{W}_{j}} 
	\leq
	- \eta
	(1-\bar{\mathsf{g}}_{Z_j,j} )
	+ \frac{\eta^2}{2}
	\bar{\mathsf{g}}^{\prime}_{Z_j,j}.
	\end{equation}
	
	Summing over $j$, we then get 
	\begin{equation}
	\label{eq: upper loss}
	\ln \frac{\mathsf{W}_{J+1}}{\mathsf{W}_{1}} 
	\leq
	- \eta
	(J - \bar{\mathsf{G}}_{{\algBabr}})
	+ \frac{\eta^2}{2}
	\sum_{j = 1}^{J}
	\bar{\mathsf{g}}^{\prime}_{Z_j,j}.
	\end{equation}
	
	Now we consider the lower bound of $\ln \frac{\mathsf{W}_{J+1}}{\mathsf{W}_{1}}$.
	For any strategy $s$,
	\begin{align*}
		\ln \frac{\mathsf{W}_{J+1}}{\mathsf{W}_{1}} 
		& \geq
		\ln \frac{\mathsf{w}_{s,j+1}}{\mathsf{W}_{1}} \\
		& =
		\ln \frac{
			\mathsf{w}_{s, 1}
			\exp
			\left(
			-\eta 
			\sum_{j=1}^{J}\bar{\mathsf{g}}^{\prime}_{s, j}
			\right)
		}
		{I\mathsf{w}_{s, 1}} \\
		& = 
		-\eta \sum_{j=1}^{J}\bar{\mathsf{g}}^{\prime}_{s, j} - \ln S. 
	\end{align*}
	Since the above inequality holds for any strategy $s$, we get
	\begin{align}
		\ln \frac{\mathsf{W}_{J+1}}{\mathsf{W}_{1}} 
		& \geq 
		-\eta
		\min_{s \in \mathcal{S}}
		\sum_{j=1}^{J}\bar{\mathsf{g}}^{\prime}_{s, j} - \ln S.
		\label{eq: lower}
	\end{align}
	
	Combining \eqref{eq: upper loss} and \eqref{eq: lower}, we have
	\begin{equation*}
		-\eta
		(J-\bar{\mathsf{G}}_{{\algBabr}})
		+ \frac{\eta^2}{2}
		\sum_{j = 1}^{J}
		\bar{\mathsf{g}}^{\prime}_{Z_j,j}
		\geq 
		-\eta
		\min_{s \in \mathcal{S}}
		\sum_{j=1}^{J}\bar{\mathsf{g}}^{\prime}_{s, j} - \ln S. 
	\end{equation*}
	Dividing both sides by $-\eta$ and moving terms, we have
	\begin{equation}
	\label{eq: before exp loss}
	(J-\bar{\mathsf{G}}_{{\algBabr}})
	-
	\min_{s \in \mathcal{S}}
	\sum_{j=1}^{J}\bar{\mathsf{g}}^{\prime}_{s, j} 
	\leq
	\frac{\eta}{2}
	\sum_{j = 1}^{J}
	\bar{\mathsf{g}}^{\prime}_{Z_j,j}
	+
	\frac{\ln S}{\eta}.	 
	\end{equation}
	
	We take expectations of both sides in \eqref{eq: before exp loss} with respect to the distribution of $Z_1, Z_2, \ldots, Z_j$. For the conditional expected value of each $\bar{\mathsf{g}}^{\prime}_{s, j} $, we have
	
	\begin{align*}
		& 
		\Expect \left[
		\bar{\mathsf{g}}^{\prime}_{s, j} | Z_1, Z_2, \ldots, Z_{j-1} 
		\right]
		\\
		=~& \mathsf{p}_{s,j} \cdot \frac{1-\bar{\mathsf{g}}_{s, j}}{\mathsf{p}_{s,j}}  +
		(1 - \mathsf{p}_{s,j}) \cdot
		\Expect \left[
		\bar{\mathsf{g}}^{\prime}_{s, j} 
		|
		s\neq Z_j
		\right]
		\\
		=~& 
		\mathsf{p}_{s,j} \frac{1- \bar{\mathsf{g}}_{s, j}}{\mathsf{p}_{s,j}}
		\\
		=~&1- \bar{\mathsf{g}}_{s, j},
	\end{align*}
	where the first equation is due to the fact that $Z_j = s$ with probability $\mathsf{p}_{s,j}$,
	and the second equation is due to that $\bar{\mathsf{g}}^{\prime}_{s, j} 
	= 
	\sum_{k \in s}
	\bar{\mathsf{f}}^{\prime}_{k,j}
	= 0
	$
	when $s \neq Z_j$.
	
	
	Similarly,
	\begin{equation*}
		\Expect \left[
		\bar{\mathsf{g}}^{\prime}_{Z_j, j} | Z_1, Z_2, \ldots, Z_{j-1} 
		\right]
		= 1 - \bar{\mathsf{g}}_{Z_j, j}.
	\end{equation*}
	Therefore, we have
	\begin{align*}
		\Expect \left[
		\bar{\mathsf{G}}_{best}
		-
		\bar{\mathsf{G}}_{{\algBabr}}
		\right]
		\leq~& 
		\frac{\eta}{2}
		\sum_{j = 1}^{J}
		\bar{\mathsf{g}}_{Z_j,j}
		+
		\frac{\ln S}{\eta} 
		\\
		\leq~& 
		\frac{\eta}{2}
		J
		S
		+
		\frac{\ln S}{\eta} .
	\end{align*}
\end{proof}

Now taking  into consideration the bound of switching costs, \eqref{eq: switching cost bound}, we have the following theorem: 
\begin{theorem}
	\label{thm3}
	For any type of adversaries, the expected weak regret of {\algB} is $O(T^{\frac{2}{3}})$ with parameters
	$\eta = A_\eta T^{-\frac{1}{3}} > 0 \text{ and }   \tau =  A_\tau T^{\frac{1}{3}} \in \left[1, T\right],$
	where $A_\gamma$ and $A_\tau$ are constants. Specifically, 
	\begin{equation}
	\label{eq: bound Gbest loss}
	\Expect\left[R_{{\algBabr}}\right] 
	\leq
	\left(\frac{A_\eta S}{2} + \frac{A_\tau \ln S}{A_\eta} + \frac{1}{A_\tau}\right) T^{\frac{2}{3}}.
	\end{equation}
\end{theorem}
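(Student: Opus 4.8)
The plan is to assemble the theorem from three ingredients already in hand: Lemma~\ref{lemma3}, the switching-cost bound~\eqref{eq: switching cost bound}, and the batch-to-slot reward identities~\eqref{eq: algo batch} and~\eqref{eq: best batch}. First I would start from the weak-regret decomposition~\eqref{eq: weak regret new} and regroup it as
\begin{equation*}
R_{{\algBabr}} = (G_{best} - G_{{\algBabr}}) + (L_{{\algBabr}} - L_{best}).
\end{equation*}
Using $G_{best} = \tau \bar{\mathsf{G}}_{best}$ and $G_{{\algBabr}} = \tau \bar{\mathsf{G}}_{{\algBabr}}$ converts the reward gap into $\tau(\bar{\mathsf{G}}_{best} - \bar{\mathsf{G}}_{{\algBabr}})$, so taking expectations and invoking Lemma~\ref{lemma3} together with~\eqref{eq: switching cost bound} gives
\begin{equation*}
\Expect\left[R_{{\algBabr}}\right] \leq \tau\left(\frac{\eta J S}{2} + \frac{\ln S}{\eta}\right) + J.
\end{equation*}

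Next I would eliminate $J$ in favour of $T$ and $\tau$. Since $J = \lceil T/\tau\rceil$, I would use $\tau J \leq T + \tau$ in the first term and $J \leq T/\tau + 1$ in the last, producing a bound whose dominant contributions are $\frac{\eta S}{2} T$, $\frac{\tau \ln S}{\eta}$, and $\frac{T}{\tau}$. Substituting the prescribed parameters $\eta = A_\eta T^{-1/3}$ and $\tau = A_\tau T^{1/3}$ makes each of these three pieces scale exactly as $T^{2/3}$: the first becomes $\frac{A_\eta S}{2} T^{2/3}$, the second $\frac{A_\tau \ln S}{A_\eta} T^{2/3}$, and the third $\frac{1}{A_\tau} T^{2/3}$. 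Summing them yields the claimed coefficient $\frac{A_\eta S}{2} + \frac{A_\tau \ln S}{A_\eta} + \frac{1}{A_\tau}$, and one checks that the side conditions $\tau = A_\tau T^{1/3} \in [1, T]$ and $\eta = A_\eta T^{-1/3} > 0$ only constrain the constants $A_\tau, A_\eta$, not the asymptotics.

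Since all three pieces are already $O(T^{2/3})$, both the overall $O(T^{2/3})$ claim and the explicit coefficient in~\eqref{eq: bound Gbest loss} follow at once. The only genuine care needed is the bookkeeping around the ceiling in $J = \lceil T/\tau\rceil$: the estimates $\tau J \leq T + \tau$ and $J \leq T/\tau + 1$ introduce extra terms of order $\frac{\eta \tau S}{2} = O(1)$ and $O(1)$, which are dominated by $T^{2/3}$ and may be absorbed (or removed entirely under the simplifying assumption $\tau \mid T$, so that $\tau J = T$). I do not expect any real obstacle here, since Lemma~\ref{lemma3} already carries out the heavy exponential-weights analysis; the theorem then reduces to rescaling by $\tau$, adding the at-most-$J$ switching cost, and optimizing the exponents of $T$. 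This is precisely why the balanced choice $\tau \sim T^{1/3}$, which trades larger per-batch reward against fewer switches, delivers the $T^{2/3}$ rate matching the lower bound of~\cite{dekel2014bandits}.
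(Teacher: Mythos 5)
Your proposal is correct and follows essentially the same route as the paper: decompose the weak regret into the reward gap plus the switching-cost gap, convert rewards to batch averages via $G = \tau\bar{\mathsf{G}}$, apply Lemma~\ref{lemma3} and the bound $L_{{\algBabr}} - L_{best}\leq J$, and substitute the prescribed parameters. The only difference is that you track the ceiling in $J=\lceil T/\tau\rceil$ explicitly (the paper silently uses $J\tau = T$), which is a harmless refinement since the extra terms are $O(1)$.
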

\begin{proof}
	\begin{align*}
		\nonumber
		\Expect\left[R_{{\algBabr}}\right]
		=~& \Expect\left[ 
		G_{best} - L_{best} 
		-G_{{\algBabr}} + L_{{\algBabr}}	
		\right]
		\\
		\leq~&
		\Expect\left[ G_{best} -  G_{{\algBabr}} \right ] + J
		\\
		=~&
		\Expect\left[
		\tau\bar{\mathsf{G}}_{best} 
		-
		\tau \bar{\mathsf{G}}_{{\algBabr}}
		\right] 
		+ J
		\\
		=~&
		\frac{\eta J	S \tau}{2}
		+
		\frac{\tau \ln S}{\eta} 
		+ J 
		\\
		=~&
		\frac{\eta 	S T}{2}
		+
		\frac{\tau \ln S}{\eta} 
		+ T/\tau 
		\\
		\nonumber
		=~&
		\left(
		\frac{A_\eta 	S }{2}
		+
		\frac{A_\tau \ln S}{A_\eta} 
		+ \frac{1}{A_\tau}
		\right) T^{\frac{2}{3}} . 
		\qedhere
	\end{align*}
\end{proof}

For better understanding of Theorem \ref{thm3}, we now give a specific bound by choosing particular parameters. 
\begin{corollary}
	\label{coro3}
	For any type of adversaries, when $T \geq \frac{1}{2} S \ln S$, with parameters
		$\eta = \sqrt[3]{\frac{4\ln S}{S^2T}}$
		and
		$\tau = \sqrt[3]{\frac{2T}{S\ln S}}$,
	the expected weak regret of {\algB} is
	\begin{equation}
	\Expect\left[R_{{\algBabr}}\right] 
	\leq
	3\left(\frac{1}{2} S\ln S\right)^{\frac{1}{3}} T^{\frac{2}{3}}.
	\end{equation}
\end{corollary}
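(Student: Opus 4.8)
The plan is to derive the corollary as a direct specialization of Theorem~\ref{thm3}, so no new machinery is needed. Writing the prescribed parameters in the form required by the theorem, I would first read off the constants $A_\eta$ and $A_\tau$: since $\eta = A_\eta T^{-1/3}$ and $\tau = A_\tau T^{1/3}$, matching against $\eta = \sqrt[3]{4\ln S/(S^2 T)}$ and $\tau = \sqrt[3]{2T/(S\ln S)}$ gives $A_\eta = \sqrt[3]{4\ln S/S^2}$ and $A_\tau = \sqrt[3]{2/(S\ln S)}$. These are precisely the values that jointly minimize the coefficient $\frac{A_\eta S}{2} + \frac{A_\tau \ln S}{A_\eta} + \frac{1}{A_\tau}$ appearing in~\eqref{eq: bound Gbest loss}; equivalently, they are the stationary point of $\frac{\eta S T}{2} + \frac{\tau \ln S}{\eta} + \frac{T}{\tau}$ in $(\eta,\tau)$, which is how they are obtained.

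Before invoking Theorem~\ref{thm3}, I would verify that the parameters are admissible, namely $\eta > 0$ and $\tau \in [1,T]$. Positivity of $\eta$ is immediate for $S \geq 2$ (so that $\ln S > 0$). The lower bound $\tau \geq 1$ is exactly where the hypothesis enters: $\tau = \sqrt[3]{2T/(S\ln S)} \geq 1$ holds iff $2T \geq S\ln S$, i.e.\ $T \geq \tfrac12 S\ln S$, which is the assumed condition; the upper bound $\tau \leq T$ then follows under the same hypothesis. This is what pins down the threshold $T \geq \tfrac12 S\ln S$ in the statement.

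With admissibility established, the remaining work is the arithmetic of substituting $A_\eta$ and $A_\tau$ into the three-term coefficient and simplifying. I expect all three terms to collapse to the \emph{same} value: a short cube-root computation shows $\frac{A_\eta S}{2} = \frac{A_\tau \ln S}{A_\eta} = \frac{1}{A_\tau} = \sqrt[3]{\tfrac12 S\ln S}$, each equal to $2^{-1/3}(S\ln S)^{1/3}$. Summing the three equal terms yields the coefficient $3\sqrt[3]{\tfrac12 S\ln S}$, and multiplying by $T^{2/3}$ gives the claimed bound $\Expect\left[R_{{\algBabr}}\right] \leq 3\left(\tfrac12 S\ln S\right)^{1/3} T^{2/3}$.

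There is no deep obstacle here, since this is a calibrated corollary of Theorem~\ref{thm3}; the content lies entirely in (a) recognizing that the prescribed $\eta,\tau$ are the balanced choices that equalize the three terms rather than merely making them of the same order, and (b) checking the feasibility constraint $\tau \geq 1$, which is exactly what forces the hypothesis $T \geq \tfrac12 S\ln S$. The only point requiring care is bookkeeping the cube-root exponents so that the three terms genuinely coincide.
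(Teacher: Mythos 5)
Your proposal is correct and follows the same route as the paper, which simply substitutes the given parameters into the bound of Theorem~\ref{thm3}; your identification of $A_\eta$, $A_\tau$, the check that the three terms each equal $2^{-1/3}(S\ln S)^{1/3}$, and the observation that $\tau \geq 1$ forces $T \geq \tfrac{1}{2}S\ln S$ are exactly the (unstated) arithmetic behind the paper's one-line proof.
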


\begin{proof}
	Substituting the parameters in \eqref{eq: bound Gbest loss}, we have the immediate result.
\end{proof}

\noindent \textit{Remark.} The expected weak regret of {\algB} is bounded by $O(T^{2/3})$. This upper bound matches the lower bound of the MAB-SC problem proved in~\cite{dekel2014bandits} which is used to model the {\problemName} in this paper. Thus, {\algB} is asymptotically optimal. If we calculate the normalized weak regret $\frac{R_{\algBabr}}{T}$, i.e., amortizing the regret to every timeslot, then it is clear that the expected value of normalized weak regret converges to 0 as $T$ approaches to $\infty$.


	\section{Spectrum Monitoring Algorithm with Bounded Weak Regret}

We have already proved that {\algB} is an effective online spectrum monitoring algorithms with expected normalized regret converging to 0. Though the \textit{expectation} provides a quite legitimate estimate on the performance of {\algA} and {\algB}, the actual value of weak regret may sometimes deviate a lot from the expected bound as expectation just represents the mean. In this section, we propose the improved algorithm, {\algC}, whose actual weak regrets are bounded by $O(T^{2/3})$ with any user-defined confidence level.
	
%
%
\subsection{{\algC}}\label{ssec: algC}

%
%




\subsubsection{Algorithm Design}

{\algC}  is designed similar to {\algB}. However, it takes four parameters $\tau$, $\gamma$, $\beta$ and $\eta$ as input, where $\gamma$ is used to calculate strategy probabilities, $\beta$ is used to calculate average channel scores, and $\eta$ is used to update strategy weights. 
%

\textbf{Calculating Strategy Probability $\mathsf{p}_{s, j}$.}
For calculating  strategy probabilities, we introduce a new concept called \textit{covering strategy set}. A covering strategy set $\mathcal{C} \subset \mathcal{S}$ is a set of strategy that \textit{cover}s all channels $\mathcal{K}$, where a channel $k \in \mathcal{K}$ is covered by $\mathcal{C}$ if there is a strategy $s \in \mathcal{C}$ such that $k \in s$.
In {\algC}, we randomly construct a minimal covering strategy set whose size $C \myeq |\mathcal{C}|$ is less than or equal to $K$. 
The probability of each strategy $s$ is calculated by 
\begin{equation}\label{eq: strategy prob in +}
\mathsf{p}_{s, j} = 
(1-\gamma)\frac{\mathsf{w}_{s, j}}{\mathsf{W}_j} + 
\frac{\gamma}{C}\mathds{1}_{s \in \mathcal{C}},
\end{equation}
where $\mathds{1}_{s \in \mathcal{C}}$ is an indicator function which has the value 1 if $s \in \mathcal{C}$; 0 otherwise.
In this way, the strategies in the covering set are more likely to be chosen than others.  As a result, {\algC} can explore all channels more quickly, and thus reveal the best channels sooner, which expedites the exploration for the best strategy. 

\textbf{Choosing Strategy $Z_j$ and Monitoring Spectrum.}
This part is the same as {\algB}.

\textbf{Updating Strategy Weight $w_{s, j+1}$.}
For calculating average channel scores, 
we introduce a new parameter $\beta$ and have
$
	\bar{\mathsf{f}}^{\prime}_{k, j} = 
	\frac{\bar{\mathsf{f}}_{k, j} + \beta}{{\mathsf{q}_{k,j}}},
	$
where ${\mathsf{q}_{k,j}}$ is the channel probability of $k \in \mathcal{K}$ in batch $j$.
By \eqref{eq: strategy prob in +}, we have
 is 
$
{\mathsf{q}_{k,j}}
= {\sum_{s: k \in s}\mathsf{p}_{s,j}}
= (1-\gamma)\frac{\sum_{s: k \in s}\mathsf{w}_{s, j}}
{\mathsf{W}_j} + \frac{\gamma C_k}{C},
$
where $C_k$ is the number of strategies in the covering strategy set and containing channel k, i.e., $C_k = |\{s | s \in \mathcal{C} \land k \in s\}|$. 

The average channel score use $\mathsf{q}_{k,j}$ as the denominator in order to  compensate the rewards of channels with low probabilities. Among the channels receiving rewards, those with lower probabilities can obtain higher average channel scores, and therefore higher channel weights.
Note that we could also gain rewards on an unmonitored channel if we had monitored it,
which indicates that the average channel score of that channel should be positive. 
With this concern, we use parameter $\beta$ to reduce the bias between monitored and unmonitored channels.


Then the channel weight of $k$ is
\begin{equation}
\label{eq: channel weight in +}
\mathsf{h}_{k, j+1} = \mathsf{h}_{k, j} 
\exp 
\left(
\eta \bar{\mathsf{f}}^{\prime}_{k, j-1}
\right),
\end{equation}
where $\mathsf{h}_{k, 1} = 1$ for all $k \in \mathcal{K}$.
Thus, the strategy weight of $s$ is updated by
\begin{equation}
\label{eq： weight update in +}			
\mathsf{w}_{s, j+1} = \mathsf{w}_{s, j} \exp (\eta \bar{\mathsf{g}}^{\prime}_{s, j}),
\end{equation}
where $\bar{\mathsf{g}}^{\prime}_{s, j} $ is  the average strategy score of $s$ in batch $j$ and can be calculated directly by
\begin{equation}\label{eq: average strategy score in +}
\bar{\mathsf{g}}^{\prime}_{s, j} 
= \sum_{k \in s} 
\frac{
	\frac{1}{\tau} 
	\sum_{i = 1}^{\tau} f_{k, [j] + i} + \beta}{(1-\gamma)\frac{\sum_{s: k \in s}\mathsf{w}_{s, j}}
	{\mathsf{W}_j} + \frac{\gamma C_k}{C}}.
\end{equation}

%
%

\subsubsection{Performance Analysis} 

Since {\algC} only update the monitoring strategy across batches, each batch can be regarded as a round in conventional MAB. We define $\bar{\mathsf{G}}_{\algCabr}  \myeq \sum_{j = 1}^{J}\bar{\mathsf{g}}_{Z_j, j}$, where $Z_j$ is {\algC}'s chosen strategy for each batch. Then we have the following lemma.

	We first  introduce the following two notations,
	\begin{align}
	\bar{\mathsf{F}}_{k, n} \myeq \sum _{j =1}^{n} \bar{\mathsf{f}}_{k, j} 
	\quad &\mbox{and} \quad
	\bar{\mathsf{F}}^{\prime}_{k, n} \myeq \sum _{j =1}^{n} \bar{\mathsf{f}}^{\prime}_{k, j} 
	\quad \mbox{for} \quad k \in \mathcal{K},
	\end{align}
	where $n$ is an arbitrary batch, $\bar{\mathsf{f}}_{k, j}= \frac{1}{\tau}\sum_{i = 1}^{\tau} f_{k, [j] + i}$ and $\bar{\mathsf{f}}^{\prime}_{k, j} = 
	\frac{\bar{\mathsf{f}}_{k, j} + \beta}{{\mathsf{q}_{k,j}}}$.
	
	We then prove the following lemma.
	\begin{lemma}
		
	For any type of adversaries, for any $\delta \in (0, 1)$, $\beta \in (0, 1)$ and $k \in \mathcal{K}$ in {\algC}, we have
	\begin{equation}
	\Probability \left[ \bar{\mathsf{F}}_{k, n} \geq \bar{\mathsf{F}}^{\prime}_{k, n} + \frac{1}{\beta} \ln \frac{K}{\delta}\right] 
	\leq
	\frac{\delta}{K}.
	\label{reward chernoff}
	\end{equation}
	\end{lemma}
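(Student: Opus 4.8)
The plan is to run the standard exponential-moment (Chernoff--Markov) argument that underlies the high-probability bounds in the EXP3.P analysis, adapted to this batched, multi-channel estimator. Fix a channel $k$. Since the estimate $\bar{\mathsf{f}}^{\prime}_{k,j} = (\bar{\mathsf{f}}_{k,j} + \beta)/\mathsf{q}_{k,j}$ carries a positive bias $\beta/\mathsf{q}_{k,j}$, it over-estimates the true per-batch reward in conditional expectation, and the whole point of the lemma is to turn this ``over-estimation in expectation'' into an ``over-estimation with high probability.'' First I would rewrite the target event through its moment generating function: applying Markov's inequality to the nonnegative variable $\exp(\beta(\bar{\mathsf{F}}_{k,n}-\bar{\mathsf{F}}^{\prime}_{k,n}))$,
\[
\Probability\!\left[\bar{\mathsf{F}}_{k,n} \geq \bar{\mathsf{F}}^{\prime}_{k,n} + \tfrac{1}{\beta}\ln\tfrac{K}{\delta}\right] = \Probability\!\left[e^{\beta(\bar{\mathsf{F}}_{k,n}-\bar{\mathsf{F}}^{\prime}_{k,n})} \geq \tfrac{K}{\delta}\right] \leq \tfrac{\delta}{K}\,\Expect\!\left[e^{\beta(\bar{\mathsf{F}}_{k,n}-\bar{\mathsf{F}}^{\prime}_{k,n})}\right].
\]
Hence it suffices to prove the single clean inequality $\Expect[e^{\beta(\bar{\mathsf{F}}_{k,n}-\bar{\mathsf{F}}^{\prime}_{k,n})}] \leq 1$, after which the claimed $\delta/K$ bound is immediate.

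Second, I would establish this moment-generating-function bound by peeling off one batch at a time. Writing $\bar{\mathsf{F}}_{k,n}-\bar{\mathsf{F}}^{\prime}_{k,n} = \sum_{j=1}^{n}(\bar{\mathsf{f}}_{k,j}-\bar{\mathsf{f}}^{\prime}_{k,j})$ and letting $\mathcal{H}_{j-1}$ denote the history of chosen strategies and observed rewards through batch $j-1$, I note that $\mathsf{q}_{k,j}$ is $\mathcal{H}_{j-1}$-measurable because the probabilities are fixed at the start of each batch. By the tower property, $\Expect[\prod_{j=1}^{n} e^{\beta(\bar{\mathsf{f}}_{k,j}-\bar{\mathsf{f}}^{\prime}_{k,j})}]$ telescopes down to at most $1$ provided that each conditional factor obeys $\Expect[e^{\beta(\bar{\mathsf{f}}_{k,j}-\bar{\mathsf{f}}^{\prime}_{k,j})}\mid \mathcal{H}_{j-1}]\leq 1$; this reduces the entire lemma to a per-batch statement. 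The only batch-internal randomness that matters is the indicator $Y_j = \mathds{1}_{k\in Z_j}$, which is Bernoulli with mean $\mathsf{q}_{k,j}$, together with the detection randomness fixing the potential reward $\bar{\mathsf{f}}_{k,j}\in[0,r]$, so that the observed estimate takes the form $\bar{\mathsf{f}}^{\prime}_{k,j} = (\bar{\mathsf{f}}_{k,j}Y_j + \beta)/\mathsf{q}_{k,j}$.

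Third --- and this is the crux --- I would carry out the per-batch computation. Conditioning first on the realized value of $\bar{\mathsf{f}}_{k,j}$ (legitimate because the potential reward for batch $j$ is fixed by the adversary and detection before the current strategy $Z_j$ is drawn, hence independent of $Y_j$ given $\mathcal{H}_{j-1}$) and then averaging over $Y_j$, I obtain the exact identity
\[
\Expect\!\left[e^{\beta(\bar{\mathsf{f}}_{k,j}-\bar{\mathsf{f}}^{\prime}_{k,j})}\,\middle|\,\mathcal{H}_{j-1},\bar{\mathsf{f}}_{k,j}\right] = e^{\beta \bar{\mathsf{f}}_{k,j}}\,e^{-\beta^2/\mathsf{q}_{k,j}}\big((1-\mathsf{q}_{k,j}) + \mathsf{q}_{k,j}\,e^{-\beta\bar{\mathsf{f}}_{k,j}/\mathsf{q}_{k,j}}\big).
\]
Applying $e^{-x}\leq 1-x+\tfrac{1}{2}x^2$ (valid for $x\geq 0$) to the last exponential bounds the bracket by $1 - \beta\bar{\mathsf{f}}_{k,j} + \tfrac{\beta^2\bar{\mathsf{f}}_{k,j}^2}{2\mathsf{q}_{k,j}}$, and then $1+x\leq e^x$ collapses the entire right-hand side to $\exp\!\big(\tfrac{\beta^2}{\mathsf{q}_{k,j}}(\tfrac{1}{2}\bar{\mathsf{f}}_{k,j}^2 - 1)\big)$. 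Because $rl\leq 1$ gives $\bar{\mathsf{f}}_{k,j}\leq r\leq 1/l\leq 1$, the factor $\tfrac{1}{2}\bar{\mathsf{f}}_{k,j}^2 - 1$ is strictly negative, so this quantity is at most $1$ for every realized value of $\bar{\mathsf{f}}_{k,j}$; averaging over $\bar{\mathsf{f}}_{k,j}$ preserves the bound, yielding the per-batch inequality needed in the second step.

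I expect the main obstacle to be precisely this per-batch bookkeeping: arranging the two elementary inequalities so that the $e^{\beta\bar{\mathsf{f}}_{k,j}}$ and $-\beta\bar{\mathsf{f}}_{k,j}$ terms cancel and the negative $-\beta^2/\mathsf{q}_{k,j}$ contribution dominates the residual $\tfrac{\beta^2\bar{\mathsf{f}}_{k,j}^2}{2\mathsf{q}_{k,j}}$, which is exactly where the scaling assumption $rl\leq 1$ is used. The accompanying subtlety is justifying that the potential channel reward $\bar{\mathsf{f}}_{k,j}$ may be treated as independent of $Z_j$ given $\mathcal{H}_{j-1}$; this holds for both adversary types, since even the adaptive adversary commits to the batch-$j$ attack distribution based only on past actions recorded in $\mathcal{H}_{j-1}$. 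The remaining pieces --- the Markov reduction and the telescoping of the product --- are routine. Finally, since the argument is carried out for an arbitrary fixed $k$ and delivers failure probability $\delta/K$ per channel, it is exactly in the form required for a later union bound over all $K$ channels to a total failure probability $\delta$.
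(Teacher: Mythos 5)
Your proposal is correct and follows essentially the same route as the paper's proof: Markov's inequality applied to the exponential moment with $v=\beta$, reduction to showing $\Expect\left[\exp\left(\beta(\bar{\mathsf{F}}_{k,n}-\bar{\mathsf{F}}^{\prime}_{k,n})\right)\right]\leq 1$, and a batch-by-batch supermartingale peeling closed with a second-order Taylor bound that exploits $\bar{\mathsf{f}}_{k,j}\leq r\leq 1/l$. The only cosmetic difference is that you evaluate the Bernoulli expectation over $\mathds{1}_{k\in Z_j}$ exactly before Taylor-expanding the deterministic exponent, whereas the paper applies $e^{x}\leq 1+x+x^{2}$ to the random exponent and then uses that the linear term has zero conditional mean; both land on the same per-batch inequality.
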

%
	 
	 \begin{proof}
	 	We prove \eqref{reward chernoff} based on~\cite[Lemma 2]{gyorgy2007line}.
	 	\quad
	 	Note that $\bar{\mathsf{f}}^{\prime}_{k, j}$ for different batches are generated independently and $\bar{\mathsf{F}}_{k, n}$ is the sum of these independent random variables. By the Chernoff bound, we have
	 	\begin{equation*}
	 	\Probability \left[\bar{\mathsf{F}}_{k, n} \geq \bar{\mathsf{F}}^{\prime}_{k, n} + u\right]
	 	\leq
	 	\exp(-uv) \Expect \left[ \exp \left(v \left(\bar{\mathsf{F}}_{k, n} - \bar{\mathsf{F}}^{\prime}_{k, n}\right)\right) \right],
	 	\end{equation*}
	 	for any $k \in \mathcal{K}$, any $u > 0$, and any $v > 0$.
	 	Let $u = \frac{1}{\beta} \ln \frac{K}{\delta}$ and $v = \beta$, then we have
	 	\begin{align*}
	 	&\exp(-uv) \Expect \left[ \exp \left(v \left(\bar{\mathsf{F}}_{k, n} - \bar{\mathsf{F}}^{\prime}_{k, n}\right)\right) \right]
	 	\\
	 	=~&
	 	\exp(-\ln \frac{K}{\delta}) \Expect \left[ \exp \left(\beta \left(\bar{\mathsf{F}}_{k, n} - \bar{\mathsf{F}}^{\prime}_{k, n}\right)\right) \right]
	 	\\
	 	=~&
	 	\frac{\delta}{K} \Expect \left[ \exp \left(\beta \left(\bar{\mathsf{F}}_{k, n} - \bar{\mathsf{F}}^{\prime}_{k, n}\right)\right) \right].
	 	\end{align*}
	 	Thus, to prove \eqref{reward chernoff},  it suffices to prove that for all $n$,
	 	\begin{equation*}
	 	\Expect \left[ \exp \left(\beta \left(\bar{\mathsf{F}}_{k, n} - \bar{\mathsf{F}}^{\prime}_{k, n}\right)\right) \right]
	 	\leq 1.
	 	\end{equation*}	
	 	Define
	 	\begin{equation*}
	 	D_n \myeq \exp \left(\beta \left(\bar{\mathsf{F}}_{k, n} - \bar{\mathsf{F}}^{\prime}_{k, n}\right)\right).
	 	\end{equation*}
	 	We first show that $\Expect_n \left[D_n\right] \leq D_{n-1}$ for $n \geq 2$, where $\Expect_n$ denotes the conditional expectation $\Expect \left[\cdot | Z_1, Z_2, \ldots Z_{n-1}\right]$. Note that
	 	\begin{equation*}
	 	D_n = D_{n-1} \exp \left(\beta \left(\bar{\mathsf{f}}_{k, n} - \bar{\mathsf{f}}^{\prime}_{k, n}\right)\right).
	 	\end{equation*}
	 	Taking conditional expectations, we obtain
	 	\begin{align*}
	 	&\Expect_n \left[D_n\right] \nonumber\\
	 	=~&
	 	D_{n-1} \Expect_n \left[\exp \left(\beta \left(\bar{\mathsf{f}}_{k, n} - \bar{\mathsf{f}}^{\prime}_{k, n}\right)\right)\right]
	 	\\
	 	=~&	
	 	D_{n-1} \Expect_n \left[\exp \left(\beta \left(\bar{\mathsf{f}}_{k, n} - \frac{\mathds{1}_{k \in Z_n}\bar{\mathsf{f}}_{k, n} + \beta}{\mathsf{q}_{k,n}}\right)\right)\right]
	 	\\
	 	=~&
	 	D_{n-1} \exp \left(-\frac{\beta^2}{\mathsf{q}_{k, n}}\right)
	 	\Expect_n \left[\exp \left(\beta \left(\bar{\mathsf{f}}_{k, n} - \frac{\mathds{1}_{k \in Z_n}\bar{\mathsf{f}}_{k, n}}{\mathsf{q}_{k,n}}\right)\right)\right]
	 	\\
	 	\leq~&
	 	D_{n-1} \exp \left(-\frac{\beta^2}{\mathsf{q}_{k, n}}\right) \Expect_n \left[1 
	 	+ \beta \left(\bar{\mathsf{f}}_{k, n} - \frac{\mathds{1}_{k \in Z_n}\bar{\mathsf{f}}_{k, n}}{\mathsf{q}_{k,n}}\right)		
	 	\right. \nonumber
	 	\\
	 	& \left. 		
	 	+ \beta^2 \left(\bar{\mathsf{f}}_{k, n} - \frac{\mathds{1}_{k \in Z_n}\bar{\mathsf{f}}_{k, n}}{\mathsf{q}_{k,n}}\right)^2 \right]
	 	\stepcounter{equation}\tag{\theequation}\label{eq: exp in thm3}
	 	\\
	 	=~&
	 	D_{n-1} \exp \left(-\frac{\beta^2}{\mathsf{q}_{k, n}}\right) \Expect_n \left[1 + \beta^2 \left(\bar{\mathsf{f}}_{k, n} - \frac{\mathds{1}_{k \in Z_n}\bar{\mathsf{f}}_{k, n}}{\mathsf{q}_{k,n}}\right)^2 \right]
	 	\stepcounter{equation}\tag{\theequation}\label{eq: expect in thm3}
	 	\\
	 	\leq~&
	 	D_{n-1} \exp \left(-\frac{\beta^2}{\mathsf{q}_{k, n}}\right) \Expect_n \left[1 + \beta^2 \left(\frac{\mathds{1}_{k \in Z_n}\bar{\mathsf{f}}_{k, n}}{\mathsf{q}_{k,n}}\right)^2 \right]
	 	\\
	 	\leq~&
	 	D_{n-1} \exp \left(-\frac{\beta^2}{\mathsf{q}_{k, n}}\right) \left(1 +  \frac{\beta^2}{\mathsf{q}_{k,n}} \right)
	 	\\
	 	\leq~&
	 	D_{n-1},
	 	\stepcounter{equation}\tag{\theequation}\label{eq: final in thm3}
	 	\end{align*}
	 	where $\mathds{1}_{k \in Z_n} = 1$ if $k \in Z_n$ and 0 otherwise; \eqref{eq: exp in thm3} holds because $\beta < 1$, $\bar{\mathsf{f}}_{k, n} - \frac{\mathds{1}_{k \in Z_n}\bar{\mathsf{f}}_{k, n}}{\mathsf{q}_{k,n}} \leq 1$ and $e^x \leq 1 + x + x^2$ for $x \leq 1$; \eqref{eq: expect in thm3} follows from $\Expect_t \left[\frac{\mathds{1}_{k \in Z_n}\bar{\mathsf{f}}_{k, n}}{\mathsf{q}_{k,n}}\right] = \bar{\mathsf{f}}_{k, n}$; \eqref{eq: final in thm3} holds by the inequality $1 +x \leq e^x$.
	 	Taking expectations on both sides proves
	 	\begin{equation*}
	 	\Expect \left[D_n\right] \leq \Expect \left[D_{n-1}\right].
	 	\end{equation*}
	 	A similar approach shows that $\Expect \left[D_1\right] \leq 1$, which implies $\Expect \left[D_n\right] = \Expect \left[ \exp \left(\beta \left(\bar{\mathsf{F}}_{k, n} - \bar{\mathsf{F}}^{\prime}_{k, n}\right)\right) \right] \leq 1$ as desired.
	 \end{proof}
	 
	
%

\begin{lemma}\label{SW+ reward}
	For any type of adversaries, any $T > 0$, $\gamma \in (0, 1/2)$, $\tau \in \left[1, T\right]$, $ \beta \in (0, 1)$, and $\eta > 0$ satisfying $2\eta lC \leq \gamma$, we have 
	\begin{equation*}
	\bar{\mathsf{G}}_{best} - \bar{\mathsf{G}}_{\algCabr}
	\leq
	\gamma J + 2\eta lCJ + \frac{l}{\beta}\ln\frac{K}{\delta} + \frac{\ln S}{\eta} + \beta KJ.
	\end{equation*}	
	with probability at least $1-\delta$ for any $\delta \in (0,1)$.
\end{lemma}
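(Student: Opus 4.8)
The plan is to follow the high-probability (EXP3.P-type) analysis in the style of \cite{gyorgy2007line}, adapted to the batched multi-channel setting, coupling a weight-potential argument on $\mathsf{W}_{j+1}/\mathsf{W}_j$ with the Chernoff concentration established in the preceding lemma; the latter is exactly what turns an in-expectation statement into one that holds with probability $1-\delta$.

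First I would upper bound $\ln(\mathsf{W}_{J+1}/\mathsf{W}_1)$. Writing $\mathsf{W}_{j+1}/\mathsf{W}_j=\sum_{s}(\mathsf{w}_{s,j}/\mathsf{W}_j)\exp(\eta\bar{\mathsf{g}}^{\prime}_{s,j})$ and applying $e^x\le 1+x+x^2$---legitimate because $\mathsf{q}_{k,j}\ge\gamma C_k/C\ge\gamma/C$ forces $\bar{\mathsf{g}}^{\prime}_{s,j}\le 2lC/\gamma$, so the constraint $2\eta lC\le\gamma$ gives $\eta\bar{\mathsf{g}}^{\prime}_{s,j}\le 1$---produces a first- and a second-order term. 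I would convert weights to probabilities via $\mathsf{w}_{s,j}/\mathsf{W}_j\le\mathsf{p}_{s,j}/(1-\gamma)$ and, per channel, $\sum_{s:k\in s}\mathsf{w}_{s,j}/\mathsf{W}_j\le\mathsf{q}_{k,j}/(1-\gamma)$. The channel decomposition $\bar{\mathsf{g}}^{\prime}_{s,j}=\sum_{k\in s}\bar{\mathsf{f}}^{\prime}_{k,j}$ then collapses the first-order term, using $\mathsf{q}_{k,j}\bar{\mathsf{f}}^{\prime}_{k,j}=\mathds{1}_{k\in Z_j}\bar{\mathsf{f}}_{k,j}+\beta$, to $\tfrac{1}{1-\gamma}(\bar{\mathsf{g}}_{Z_j,j}+\beta K)$. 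Taking $\ln(1+x)\le x$ and summing over the $J$ batches yields the upper bound.

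For the lower bound I would isolate a single strategy, $\ln(\mathsf{W}_{J+1}/\mathsf{W}_1)\ge\ln(\mathsf{w}_{s,J+1}/\mathsf{W}_1)=\eta\sum_{j}\bar{\mathsf{g}}^{\prime}_{s,j}-\ln S$ (using $\mathsf{W}_1=S$), specialized to $s=s_{best}$. The key coupling to the \emph{true} reward then comes from the preceding concentration lemma: a union bound over the $K$ channels gives, with probability at least $1-\delta$, $\bar{\mathsf{F}}_{k,J}\le\bar{\mathsf{F}}^{\prime}_{k,J}+\tfrac1\beta\ln\tfrac K\delta$ simultaneously for all $k$, and summing over the $l$ channels of $s_{best}$ gives $\bar{\mathsf{G}}_{best}\le\sum_{j}\bar{\mathsf{g}}^{\prime}_{s_{best},j}+\tfrac l\beta\ln\tfrac K\delta$. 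Chaining the upper and lower bounds at $s_{best}$, inserting this inequality, dividing by $\eta$, and simplifying with $\gamma<1/2$ and $\bar{\mathsf{G}}_{\algCabr}\le J$ then assembles the five stated terms $\gamma J$, $2\eta lCJ$, $\tfrac l\beta\ln\tfrac K\delta$, $\tfrac{\ln S}\eta$, and $\beta KJ$.

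The hard part will be the second-order term $\eta^2\sum_j\sum_s(\mathsf{w}_{s,j}/\mathsf{W}_j)(\bar{\mathsf{g}}^{\prime}_{s,j})^2$. After a Cauchy--Schwarz step $(\sum_{k\in s}\bar{\mathsf{f}}^{\prime}_{k,j})^2\le l\sum_{k\in s}(\bar{\mathsf{f}}^{\prime}_{k,j})^2$ and the weight-to-probability conversion, it reduces to $\tfrac{l}{1-\gamma}\sum_k\mathsf{q}_{k,j}(\bar{\mathsf{f}}^{\prime}_{k,j})^2=\tfrac{l}{1-\gamma}\sum_k(\mathds{1}_{k\in Z_j}\bar{\mathsf{f}}_{k,j}+\beta)^2/\mathsf{q}_{k,j}$, and the real work is showing that the small denominators $\mathsf{q}_{k,j}$ do not make this blow up---this is where the covering-set construction earns its keep, through the lower bound $\mathsf{q}_{k,j}\ge\gamma C_k/C$ and the counting identity $\sum_k C_k=lC$, and where the constraint $2\eta lC\le\gamma$ is used a second time to linearize the resulting $\eta^2/\gamma$ factor into the clean $2\eta lCJ$ contribution. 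Tracking the $1/(1-\gamma)$ factors so that $\gamma J$ and $\beta KJ$ emerge with the stated (rather than doubled) constants is the remaining bookkeeping to watch.
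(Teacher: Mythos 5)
Your overall architecture is the same as the paper's: the potential argument on $\ln(\mathsf{W}_{J+1}/\mathsf{W}_1)$ with $e^x\le 1+x+x^2$ (justified exactly as you say, via $\mathsf{q}_{k,j}\ge\gamma/C$ and $2\eta lC\le\gamma$), the collapse of the first-order term to $\frac{1}{1-\gamma}(\bar{\mathsf{g}}_{Z_j,j}+\beta K)$, the single-strategy lower bound $\eta\sum_j\bar{\mathsf{g}}^{\prime}_{s,j}-\ln S$, and the coupling to the true reward via the preceding Chernoff lemma plus a union bound over the $K$ channels. All of that matches the paper's proof step for step.

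The one place you diverge is the second-order term, and your plan there is the weaker of the two. The paper never lower-bounds $\mathsf{q}_{k,j}$ in that term: it cancels one power of the denominator, $(\bar{\mathsf{f}}^{\prime}_{k,j})^2\mathsf{q}_{k,j}=\bar{\mathsf{f}}^{\prime}_{k,j}(\mathds{1}_{k\in Z_j}\bar{\mathsf{f}}_{k,j}+\beta)\le(1+\beta)\bar{\mathsf{f}}^{\prime}_{k,j}$, then invokes the covering property only in the form $\sum_{k\in\mathcal{K}}\bar{\mathsf{f}}^{\prime}_{k,j}\le\sum_{s\in\mathcal{C}}\bar{\mathsf{g}}^{\prime}_{s,j}$, so that after summing over $j$ the entire second-order contribution is $\frac{\eta^2l(1+\beta)C}{1-\gamma}\max_{s}\bar{\mathsf{G}}^{\prime}_{s,J}$ --- a multiple of the \emph{same random quantity} appearing in the lower bound. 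Merging it into that coefficient gives $\bar{\mathsf{G}}_{\algCabr}\ge(1-\gamma-\eta l(1+\beta)C)\max_s\bar{\mathsf{G}}^{\prime}_{s,J}-\frac{1-\gamma}{\eta}\ln S-JK\beta$, and only after the concentration step and the deterministic bound $\bar{\mathsf{G}}_{best}\le J$ does the clean $2\eta lCJ$ emerge. Your route --- eliminating the denominators outright via $\mathsf{q}_{k,j}\ge\gamma C_k/C$ --- produces a term of order $\eta KJ$ with constants that do not obviously compress to $2\eta lCJ$; and the identity $\sum_kC_k=lC$ does not help you there, since upper-bounding $\sum_k 1/C_k$ requires $C_k\ge1$ (by AM--HM, $\sum_kC_k=lC$ only gives a \emph{lower} bound on $\sum_k1/C_k$). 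So the step you correctly identify as the hard part is also the step where your plan, as written, would at best prove the lemma with degraded constants and at worst stall; the fix is the paper's self-referential bookkeeping rather than a sharper bound on the denominators.
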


\begin{proof}
	First, we show $0 \leq \eta\bar{\mathsf{g}}_{s, j} \leq 1$. It is easy to notice that $\eta \bar{\mathsf{g}}^{\prime}_{s, j} \geq 0$.
	In addition, we have 
	\begin{align*}
	\eta \bar{\mathsf{g}}^{\prime}_{s, j}
	&=
	\eta \sum_{k \in s}\bar{\mathsf{f}}^{\prime}_{k, j}
	\leq
	\eta \sum_{k \in s}\frac{\bar{\mathsf{f}}_{k, j} + \beta}{\bar{\mathsf{q}}_{s, j}} 
	\\&\leq
	\eta \sum_{k \in s}\frac{1 + \beta}{\frac{\gamma}{C}} 
	\leq
	\frac{(1 + \beta)\eta lC}{\gamma}
	\leq
	1 , 
	\end{align*}
	where the second inequality is due to $\mathsf{q}_{k,j} \geq \frac{\gamma}{C}$  for all $k \in \mathcal{K}$, and the last inequality is due to $2\eta lC \leq \gamma$. 
	
	Then we analyze $\frac{\mathsf{W}_{j+1}}{\mathsf{W}_{j}}$.
	For any sequence $Z_1, \ldots, Z_j$ generated by {\algC}, we have
	\begin{align*}
	&\frac{\mathsf{W}_{j+1}}{\mathsf{W}_{j}} 
	\\
	=~& \sum_{s \in \mathcal{S}}
	\frac{\mathsf{w}_{s, j+1}}{\mathsf{W}_j}
	\\
	=~& \sum_{s \in \mathcal{S}}
	\frac{\mathsf{w}_{s, j}}{\mathsf{W}_j}
	\exp\left(\eta\bar{\mathsf{g}}^{\prime}_{s,j} \right)
	\\
	=~& \sum_{s \in \mathcal{S}}
	\frac{\mathsf{p}_{s, j} - \frac{\gamma}{C} \mathds{1}_{s \in \mathcal{C}}}{1 - \gamma} \exp\left(\eta \bar{\mathsf{g}}^{\prime}_{s,j}\right)
	\stepcounter{equation}\tag{\theequation}\label{eq: w to p in +}
	\\
	\leq~& \sum_{s \in \mathcal{S}}
	\frac{\mathsf{p}_{s, j} - \frac{\gamma}{C} \mathds{1}_{s \in \mathcal{C}}}{1 - \gamma}
	\left(1+ \eta \bar{\mathsf{g}}^{\prime}_{s,j} 
	+ \eta ^2 \left(\bar{\mathsf{g}}^{\prime}_{s,j}\right)^2\right)
	\stepcounter{equation}\tag{\theequation}\label{eq: e to 1 + x + x2 in +}
	\\
	\leq~& \frac{1 - \gamma}{1 - \gamma} 
	+ \sum_{s \in \mathcal{S}} \frac{\mathsf{p}_{s, j}}{1 - \gamma}
	\left(\eta \bar{\mathsf{g}}^{\prime}_{s,j} 
	+ \eta ^2 \left(\bar{\mathsf{g}}^{\prime}_{s,j}\right)^2\right)
	\\
	\leq~& 1 
	+ \frac{\eta}{1 - \gamma}
	\sum_{s \in \mathcal{S}}
	\mathsf{p}_{s, j}\bar{\mathsf{g}}^{\prime}_{s,j}
	+ \frac{\eta^2}{1 - \gamma}
	\sum_{s \in \mathcal{S}}
	\mathsf{p}_{s, j}\left(\bar{\mathsf{g}}^{\prime}_{s,j}\right)^2
	\stepcounter{equation}\tag{\theequation}\label{eq: delete p in +},
	\end{align*}
	where \eqref{eq: w to p in +} uses the definition of $\mathsf{p}_{s, j}$ in \eqref{eq: strategy prob in +}, and \eqref{eq: e to 1 + x + x2 in +} holds by the fact that $e^x \leq 1 + x + x^2$ for $0 \leq x \leq 1$. 
	
	Next we bound \eqref{eq: delete p in +}. For the second term, we have
	\begin{align*}
	\sum_{s \in \mathcal{S}}
	\mathsf{p}_{s, j}\bar{\mathsf{g}}^{\prime}_{s,j}
	& = \sum_{s \in \mathcal{S}}
	\left(
	\mathsf{p}_{s, j}
	\sum_{k \in s}
	\bar{\mathsf{f}}^{\prime}_{k,j}
	\right)
	\\
	& = \sum_{k \in \mathcal{K}}
	\left(
	\bar{\mathsf{f}}^{\prime}_{k,j}
	\sum_{s: k \in s}
	\mathsf{p}_{s, j}
	\right)
	\\
	& = \sum_{k \in \mathcal{K}}
	\left(
	\bar{\mathsf{f}}^{\prime}_{k,j}
	\mathsf{q}_{k, j}
	\right)
	\\
	& = \sum_{k \in \mathcal{K}}
	\left(\bar{\mathsf{f}}_{k,j} + \beta\right)
	\\
	& = \sum_{k \in Z_j}
	\left(\bar{\mathsf{f}}_{k,j} + \beta\right)
	+ \sum_{k \in \mathcal{K}\backslash Z_j}
	\left(\bar{\mathsf{f}}_{k,j} + \beta\right)
	\\
	& = \bar{\mathsf{g}}_{Z_j,j} + K \beta ,
	\stepcounter{equation}\tag{\theequation}\label{eq: delete K-Zj in +}
	\end{align*}
	where \eqref{eq: delete K-Zj in +} uses the definition of  average strategy reward
	 and the fact that $\bar{\mathsf{f}}_{k,j}$ is 0 when $k \notin Z_j$.
	For the second sum,
	\begin{align*}
	\sum_{s \in \mathcal{S}}
	\mathsf{p}_{s, j}\left(\bar{\mathsf{g}}^{\prime}_{s,j}\right)^2
	& = \sum_{s \in \mathcal{S}}
	\left(
	\mathsf{p}_{s, j}
	\left(
	\sum_{k \in s}
	\bar{\mathsf{f}}^{\prime}_{k,j}
	\right)^2
	\right)
	\\
	& \leq \sum_{s \in \mathcal{S}}
	\left(
	\mathsf{p}_{s, j}
	\cdot
	l
	\cdot
	\sum_{k \in s}
	\left(
	\bar{\mathsf{f}}^{\prime}_{k,j}
	\right)^2
	\right)
	\stepcounter{equation}\tag{\theequation}\label{eq: mean inequality in +}
	\\
	& = l \cdot
	\sum_{k \in \mathcal{K}}
	\left(
	\left(
	\bar{\mathsf{f}}^{\prime}_{k,j}
	\right)^2
	\sum_{s: k \in s}
	\mathsf{p}_{s, j}
	\right)
	\\
	& = l \cdot
	\sum_{k \in \mathcal{K}}
	\left(
	\left(
	\bar{\mathsf{f}}^{\prime}_{k,j}
	\right)^2
	\mathsf{q}_{k, j}
	\right)
	\\
	& = l \cdot
	\sum_{k \in \mathcal{K}}
	\left(
	\bar{\mathsf{f}}^{\prime}_{k,j}
	\cdot
	\frac{\mathds{1}_{k \in Z_j} \bar{\mathsf{f}}_{k,j} + \beta}{\mathsf{q}_{k, j}}
	\cdot
	\mathsf{q}_{k, j}
	\right)
	\\
	& \leq l \cdot (1 + \beta) \cdot
	\sum_{k \in \mathcal{K}}
	\bar{\mathsf{f}}^{\prime}_{k,j}
	\\
	& \leq l \cdot (1 + \beta) \cdot
	\sum_{s \in \mathcal{C}}
	\bar{\mathsf{g}}^{\prime}_{s,j},
	\stepcounter{equation}\tag{\theequation}\label{eq: delete K-Zj again in +}
	\end{align*}
	where \eqref{eq: mean inequality in +} holds as a special case of the Cauchy-Schwarz Inequality $\left(\sum_{i = 1}^{n}a_i \cdot 1\right)^2 \leq \left(\sum_{i = 1}^{n}a_i^2 \right) \left(\sum_{i = 1}^{n} 1^2\right)$, and \eqref{eq: delete K-Zj again in +} holds because covering strategy set $\mathcal{C}$ covers each channel at least once.

	Therefore, combining~\eqref{eq: delete p in +}, ~\eqref{eq: delete K-Zj in +}, and~\eqref{eq: delete K-Zj again in +}, we have
	\begin{equation*}
	\frac{\mathsf{W}_{j+1}}{\mathsf{W}_{j}} 
	\leq 
	1 
	+ \frac{\eta}{1 - \gamma}
	\left(\bar{\mathsf{g}}_{Z_j,j} + K \beta\right)
	+ \frac{\eta^2 l (1 + \beta)}{1 - \gamma}
	\sum_{s \in \mathcal{C}}
	\bar{\mathsf{g}}^{\prime}_{s,j} .
	\end{equation*}
	
	Taking the log of both sides and using $1+x \leq e^x$ gives
	\begin{equation*}
	\ln \frac{\mathsf{W}_{j+1}}{\mathsf{W}_{j}} 
	\leq
	\frac{\eta}{1 - \gamma}
	\left(\bar{\mathsf{g}}_{Z_j,j} + K \beta\right)
	+ \frac{\eta^2 l (1 + \beta)}{1 - \gamma}
	\sum_{s \in \mathcal{C}}
	\bar{\mathsf{g}}^{\prime}_{s,j} .
	\end{equation*}
	
	Summing over $j$ we then get 
	\begin{equation*}
	\ln \frac{\mathsf{W}_{J+1}}{\mathsf{W}_{1}} 
	\leq
	\frac{\eta}{1 - \gamma}
	\left(\bar{\mathsf{G}}_{\algCabr} + JK\beta \right)	
	+  \frac{\eta^2 l (1 + \beta)}{1 - \gamma}
	\sum_{j = 1}^{J}
	\sum_{s \in \mathcal{C}}
	\bar{\mathsf{g}}^{\prime}_{s,j} .
	\end{equation*}
	
	Note that 
	\begin{equation*}
	\sum_{j = 1}^{J}
	\sum_{s \in \mathcal{C}}
	\bar{\mathsf{g}}^{\prime}_{s,j}
	\leq
	C \max_{s \in \mathcal{S}} \sum _{j =1}^{J} \bar{\mathsf{g}}^{\prime}_{s,j}
	\leq
	C \max_{s \in \mathcal{S}} \bar{\mathsf{G}}^{\prime}_{s,J}.
	\end{equation*}	
	
	We have
	\begin{align}
	\ln \frac{\mathsf{W}_{J+1}}{\mathsf{W}_{1}} 
	\le
	\frac{\eta}{1 - \gamma}
	\left(\bar{\mathsf{G}}_{\algCabr} + JK\beta \right)	
	+  \frac{\eta^2 l (1 + \beta) C}{1 - \gamma}
	\max_{s \in \mathcal{S}} \bar{\mathsf{G}}^{\prime}_{s,J} . 
	\label{eq: upper in +}
	\end{align}
	
	Now we consider the lower bound of $\ln \frac{\mathsf{W}_{J+1}}{\mathsf{W}_{1}} $.
	For any strategy $s$,
	\begin{align*}
	\ln \frac{\mathsf{W}_{J+1}}{\mathsf{W}_{1}} 
	& \geq
	\ln \frac{\mathsf{w}_{s,j+1}}{\mathsf{W}_{1}}
	\\
	& =
	\ln \frac{
		\mathsf{w}_{s, 1}
		\exp
		\left(
		\eta
		\sum_{j=1}^{J}\bar{\mathsf{g}}^{\prime}_{s, j}
		\right)
	}
	{S\mathsf{w}_{s, 1}}
	\\
	& = 
	\eta \sum_{j=1}^{J}\bar{\mathsf{g}}^{\prime}_{s, j} - \ln S 
	\\
	& = 
	\eta \bar{\mathsf{G}}^{\prime}_{s, J} - \ln S .
	\end{align*}
	Since the above inequality holds for any strategy $s$, we get
	\begin{align}
	\ln \frac{\mathsf{W}_{J+1}}{\mathsf{W}_{1}} 
	& \geq 
	\eta
	\max_{s \in \mathcal{S}}
	\bar{\mathsf{G}}^{\prime}_{s, J} - \ln S .
	\label{eq: lower in +}
	\end{align}
	
	Combining \eqref{eq: upper in +} and \eqref{eq: lower in +}, we have
	\begin{equation}
	\bar{\mathsf{G}}_{\algCabr}
	\geq 
	(1-\gamma - \eta l (1 + \beta) C)
	\max_{s \in \mathcal{S}}
	\bar{\mathsf{G}}^{\prime}_{s, J}
	- \frac{1 - \gamma}{\eta}\ln S 
	- JK\beta .
	\end{equation}
	
	Note that
	\begin{equation*}
	\bar{\mathsf{G}}^{\prime}_{s, J} = \sum _{j =1}^{J} \bar{\mathsf{g}}^{\prime}_{s, j} = \sum _{j =1}^{J}  \sum_{k \in s} \bar{\mathsf{f}}^{\prime}_{k, j}= \sum_{k \in s} \bar{\mathsf{F}}^{\prime}_{k, n},
	\end{equation*}
	and that
	\begin{equation*}
	\bar{\mathsf{G}}_{s, J} = \sum _{j =1}^{J} \bar{\mathsf{g}}_{s, j} = \sum _{j =1}^{J}  \sum_{k \in s} \bar{\mathsf{f}}_{k, j}= \sum_{k \in s} \bar{\mathsf{F}}_{k, n}.
	\end{equation*}
	
	By using Lemma~\ref{reward chernoff} and applying Boole's inequality, we obtain that, with probability at least $1 - \delta$,
	\begin{align*}
	\bar{\mathsf{G}}_{\algCabr}
	&\geq 
	(1-\gamma - \eta l (1 + \beta) C)
	\left(
	\max_{s \in \mathcal{S}}
	\bar{\mathsf{G}}_{s, J} - \frac{l}{\beta} \ln \frac{K}{\delta}
	\right)	\nonumber
	\\
	& \quad - \frac{1 - \gamma}{\eta}\ln S 
	- JK\beta
	\\
	&\geq 
	(1-\gamma - \eta l (1 + \beta) C)
	\left(
	\bar{\mathsf{G}}_{best} - \frac{l}{\beta} \ln \frac{K}{\delta}
	\right)	\nonumber
	\\
	& \quad
	- \frac{1 - \gamma}{\eta}\ln S 
	- JK\beta,
	\end{align*}
	where $1-\gamma - \eta l (1 + \beta) C > 0$ because $\eta l (1 + \beta) C \leq 2 \eta l C \leq \gamma < 1/2$.
	
	Therefore,
	\begin{align*}
	&\bar{\mathsf{G}}_{best}
	-
	\bar{\mathsf{G}}_{\algCabr} \nonumber
	\\
	\leq~&
	(\gamma + \eta l (1 + \beta) C)
	\bar{\mathsf{G}}_{best}	
	\\
	& +
	\left(1-\gamma - \eta l (1 + \beta) C\right)
	\frac{l}{\beta} \ln \frac{K}{\delta} \nonumber
	+
	\frac{1 - \gamma}{\eta}\ln S 
	+ JK\beta
	\\
	\leq~& 
	\gamma J + 2\eta lC J
	+	
	\frac{l}{\beta} \ln \frac{K}{\delta}
	+\frac{\ln S }{\eta}
	+ \beta KJ,
	\end{align*}
	where the last inequity is due to the fact that $\bar{\mathsf{G}}_{best} \leq J$. 
\end{proof}

Next, we bound the difference between the cumulative strategy reward of the best fixed algorithm and that of {\algC}.
\begin{theorem}
	\label{SW+ theorem}
	For any type of adversaries, with probability at least $1-\delta$, the weak regret of {\algC} is bounded by  $O(T^{\frac{2}{3}})$. In particular, choosing
	$	\tau = B_{\tau} T^{\frac{1}{3}} \in \left[1, T\right],$
	$	\gamma = B_{\gamma} T^{-\frac{1}{3}} \in (0, \frac{1}{2}),$
	$	\beta = B_{\beta} T^{-\frac{1}{3}} \in (0, 1),$
	$	\text{and }  \eta  = \frac{B_{\gamma}}{2lC} T^{-\frac{1}{3}}$,
	where $ B_{\tau}$, $B_{\gamma}$, and $B_{\beta}$ are constants, 
	we have
	\begin{align}
		\label{eq: bound Gbest reward confidence}
		R_{\algCabr}
		\leq 
		\left(
		2B_{\gamma} + B_{\beta}K  	
		+  B_{\tau}
		\left(
		\frac{ l \ln \frac{K}{\delta}}{B_{\beta}} + \frac{\ln S}{B_{\eta}}
		\right)	
		+ \frac{1}{B_{\tau}}
		\right) T^{\frac{2}{3}}.
	\end{align}
\end{theorem}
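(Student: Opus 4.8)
The plan is to mirror the structure of the proof of Theorem~\ref{thm3}, reducing the weak regret to the batched reward gap already controlled by Lemma~\ref{SW+ reward}. First I would expand $R_{\algCabr} = (G_{best} - L_{best}) - (G_{\algCabr} - L_{\algCabr})$ and absorb the switching-cost contribution exactly as in~\eqref{eq: switching cost bound}: since {\algC} switches at most once per batch while the best fixed algorithm never switches, $L_{\algCabr} - L_{best} \leq J$, giving $R_{\algCabr} \leq (G_{best} - G_{\algCabr}) + J$. Then, using the batching identities $G_{best} = \tau \bar{\mathsf{G}}_{best}$ and $G_{\algCabr} = \tau \bar{\mathsf{G}}_{\algCabr}$ (the analogues of~\eqref{eq: algo batch} and~\eqref{eq: best batch}, which hold because {\algC} also fixes its strategy within each batch), the reward gap becomes $\tau(\bar{\mathsf{G}}_{best} - \bar{\mathsf{G}}_{\algCabr})$.

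Next I would invoke Lemma~\ref{SW+ reward} directly: on the event of probability at least $1-\delta$, one has $\bar{\mathsf{G}}_{best} - \bar{\mathsf{G}}_{\algCabr} \leq \gamma J + 2\eta lCJ + \frac{l}{\beta}\ln\frac{K}{\delta} + \frac{\ln S}{\eta} + \beta KJ$. Multiplying by $\tau$, adding $J$, and substituting $\tau J = T$ (equivalently $J = T/\tau$, as in the proof of Theorem~\ref{thm3}) collapses the $\tau J$ products into $T$ and leaves
\begin{equation*}
R_{\algCabr} \leq \gamma T + 2\eta lCT + \frac{\tau l}{\beta}\ln\frac{K}{\delta} + \frac{\tau \ln S}{\eta} + \beta KT + \frac{T}{\tau}
\end{equation*}
with probability at least $1-\delta$.

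The final step is purely arithmetic: plug in $\tau = B_{\tau} T^{1/3}$, $\gamma = B_{\gamma} T^{-1/3}$, $\beta = B_{\beta} T^{-1/3}$, and $\eta = \frac{B_{\gamma}}{2lC}T^{-1/3}$, so that $B_\eta = \frac{B_{\gamma}}{2lC}$. Each of the six terms then carries a factor $T^{2/3}$: both $\gamma T$ and $2\eta lCT$ reduce to $B_{\gamma} T^{2/3}$ (yielding the leading $2B_{\gamma}$), $\frac{\tau l}{\beta}\ln\frac{K}{\delta}$ gives $\frac{B_{\tau} l}{B_{\beta}}\ln\frac{K}{\delta}\,T^{2/3}$, $\frac{\tau \ln S}{\eta}$ gives $\frac{B_{\tau} \ln S}{B_\eta}T^{2/3}$, $\beta KT$ gives $B_{\beta} K T^{2/3}$, and $T/\tau$ gives $\frac{1}{B_{\tau}}T^{2/3}$. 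Collecting these reproduces exactly the stated bound~\eqref{eq: bound Gbest reward confidence}.

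I expect the only genuine subtlety---the main obstacle---to be verifying that the chosen parameters remain admissible for Lemma~\ref{SW+ reward}, whose hypothesis requires $2\eta lC \leq \gamma$ together with $\gamma \in (0,1/2)$, $\beta \in (0,1)$, and $\tau \in [1,T]$. The choice $\eta = \frac{B_{\gamma}}{2lC}T^{-1/3}$ is engineered precisely so that $2\eta lC = B_{\gamma} T^{-1/3} = \gamma$, making the key constraint hold with equality; the remaining range conditions hold for suitable constants $B_{\gamma}, B_{\beta}, B_{\tau}$ once $T$ is large enough. The minor bookkeeping distinguishing $J = \lceil T/\tau\rceil$ from $T/\tau$ only perturbs lower-order terms and does not affect the $O(T^{2/3})$ order, exactly as in Theorem~\ref{thm3}.
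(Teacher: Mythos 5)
Your proposal is correct and follows essentially the same route as the paper's own proof: bound the switching-cost gap by $J$, convert to the batched quantities via $G = \tau\bar{\mathsf{G}}$, apply Lemma~\ref{SW+ reward}, and substitute the parameters (with $B_\eta = \frac{B_\gamma}{2lC}$ so that $\gamma T$ and $2\eta lCT$ each contribute $B_\gamma T^{2/3}$). Your additional check that $2\eta lC = \gamma$ makes the lemma's hypothesis hold with equality is a worthwhile detail the paper leaves implicit.
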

\begin{proof}
	Similar to the proof of Theorem~\ref{thm3}, we have
	\begin{align*}
	R_{\algBabr}
	=~& G_{best} - L_{best} - G_{\algBabr} + L_{\algBabr}
	\\
	\leq~&
	G_{best} - G_{\algBabr} + J
	\\
	=~&
	\tau\bar{\mathsf{G}}_{best} 
	- \tau \bar{\mathsf{G}}_{\algBabr}
	+ \frac{T}{\tau}
	\\
	\leq~&
	\tau
	\left(
	\gamma J + 2\eta lC J
	+	
	\frac{l}{\beta} \ln \frac{K}{\delta}
	+\frac{\ln S }{\eta}
	+ \beta KJ
	\right)	
	+ \frac{T}{\tau},
	\\
	\leq~&	
	\left(
	\gamma + 2\eta lC + \beta K	
	\right) T	
	+
	\tau
	\left(
	\frac{l}{\beta} \ln \frac{K}{\delta}
	+\frac{\ln S }{\eta}
	\right)	
	+ 
	\frac{T}{\tau},
	\stepcounter{equation}\tag{\theequation}\label{eq: + eq}
	\end{align*}
	with probability at least $1- \delta$.
	The last inequality follows from Lemma \ref{SW+ reward}. Plugging in the value of parameters finishes the proof. 
\end{proof}

We now provide an example choice of parameters to reach a specific bound. 
\begin{corollary}\label{SW+ special}
	For any type of adversaries, under the condition of 	
	\begin{equation*}
	T \geq \max \left\{B^2, \frac{8 \left(l C \ln S\right)^{3/2}}{B}, \frac{\left(\frac{l}{K} \ln \frac{K}{\delta} \right)^{3/2}}{B} \right\},	
	\end{equation*}
	using parameters
	$\tau = B^{-\frac{2}{3}} T^{\frac{1}{3}},$
	$\gamma = \sqrt{lC \ln S} \cdot B^{-\frac{1}{3}} T^{-\frac{1}{3}},$
	$\beta =  \sqrt{\frac{l}{K} \ln \frac{K}{\delta}} \cdot B^{-\frac{1}{3}} T^{-\frac{1}{3}},$
	$\text{and } \eta  = \sqrt{\frac{\ln S}{4lC}} \cdot B^{-\frac{1}{3}} T^{-\frac{1}{3}}$,
	where $B = 4\sqrt{lC \ln S} + 2\sqrt{lK\ln\frac{K}{\delta}}$, we have
	\begin{equation}
	R_{\algCabr}
	\leq
	2 \left(4\sqrt{lC\ln S} + 2\sqrt{lK\ln\frac{K}{\delta}}\right)^{\frac{2}{3}} T^{\frac{2}{3}},
	\end{equation}
	with probability at least $1-\delta$.
\end{corollary}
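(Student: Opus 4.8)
The plan is to treat Corollary~\ref{SW+ special} as a pure specialization of Theorem~\ref{SW+ theorem}: the corollary simply instantiates the concrete parameter set and reports the resulting constant, so the whole argument splits into two tasks. First I would check that the chosen $\tau,\gamma,\beta,\eta$ satisfy every hypothesis required by Theorem~\ref{SW+ theorem} (and hence by the underlying Lemma~\ref{SW+ reward}), which is precisely where the three-way $\max$ in the assumption on $T$ is used. Second, I would substitute the parameters into the parameter-agnostic bound \eqref{eq: + eq} and simplify.

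For feasibility, writing $B = 4\sqrt{lC\ln S} + 2\sqrt{lK\ln\frac{K}{\delta}}$, each branch of the $\max$ pins down exactly one constraint. The requirement $\tau = B^{-2/3}T^{1/3} \geq 1$ is equivalent to $T \geq B^2$; the requirement $\gamma = \sqrt{lC\ln S}\,B^{-1/3}T^{-1/3} < 1/2$ is equivalent to $T > 8(lC\ln S)^{3/2}/B$; and the requirement $\beta = \sqrt{\tfrac{l}{K}\ln\frac{K}{\delta}}\,B^{-1/3}T^{-1/3} < 1$ is equivalent to $T > (\tfrac{l}{K}\ln\frac{K}{\delta})^{3/2}/B$. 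I would then note that $\tau \leq T$ and $\eta > 0$ hold trivially, and that the key structural constraint $2\eta lC \leq \gamma$ is in fact an \emph{equality} here, since $2\eta lC = 2lC\sqrt{\tfrac{\ln S}{4lC}}\,B^{-1/3}T^{-1/3} = \sqrt{lC\ln S}\,B^{-1/3}T^{-1/3} = \gamma$.

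With feasibility in hand, I would plug into \eqref{eq: + eq}. The single observation that makes everything collapse is the factorization $B/2 = 2\sqrt{lC\ln S} + \sqrt{lK\ln\frac{K}{\delta}}$. Using $\gamma = 2\eta lC = \sqrt{lC\ln S}\,B^{-1/3}T^{-1/3}$ and $\beta K = \sqrt{lK\ln\frac{K}{\delta}}\,B^{-1/3}T^{-1/3}$, the first term becomes $(\gamma + 2\eta lC + \beta K)T = \tfrac{B}{2}B^{-1/3}T^{2/3} = \tfrac12 B^{2/3}T^{2/3}$. Dually, $\tfrac{l}{\beta}\ln\frac{K}{\delta} = \sqrt{lK\ln\frac{K}{\delta}}\,B^{1/3}T^{1/3}$ and $\tfrac{\ln S}{\eta} = 2\sqrt{lC\ln S}\,B^{1/3}T^{1/3}$, so the middle term is $\tau\big(\tfrac{l}{\beta}\ln\frac{K}{\delta} + \tfrac{\ln S}{\eta}\big) = B^{-2/3}T^{1/3}\cdot\tfrac{B}{2}B^{1/3}T^{1/3} = \tfrac12 B^{2/3}T^{2/3}$, while the last term is $T/\tau = B^{2/3}T^{2/3}$. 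Summing the three contributions gives $(\tfrac12 + \tfrac12 + 1)B^{2/3}T^{2/3} = 2B^{2/3}T^{2/3}$, which is the stated bound, and it inherits the ``with probability at least $1-\delta$'' guarantee directly from Theorem~\ref{SW+ theorem}.

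I expect no genuine obstacle: this is verification rather than discovery. The only place that demands care is bookkeeping, namely matching each of the three $T$-thresholds to the correct parameter constraint (not conflating the $\tau\ge 1$ threshold with the $\gamma < 1/2$ one), and tracking the exponents of $B$ and $T$ so that all three terms land on $T^{2/3}$. Spotting the factor $B/2$ up front is what forces the final constant to be exactly $2$ rather than an unsimplified expression, so I would state that identity explicitly before doing any substitution.
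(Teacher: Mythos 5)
Your proposal is correct and matches the paper's approach exactly: the paper's own proof is the one-line ``substituting the parameters in \eqref{eq: bound Gbest reward confidence}, we have the immediate result,'' and you have simply carried out that substitution in full, including the feasibility checks (the three branches of the $\max$ matching $\tau\geq 1$, $\gamma<1/2$, $\beta<1$, and the equality $2\eta lC=\gamma$) and the $\tfrac12+\tfrac12+1=2$ accounting via the identity $B/2=2\sqrt{lC\ln S}+\sqrt{lK\ln\tfrac{K}{\delta}}$. All the arithmetic checks out.
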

\begin{proof}
	Substituting the parameters in \eqref{eq: bound Gbest reward confidence}, we have the immediate result.
\end{proof}

\noindent \textit{Remark.}
Note that it is not guaranteed that {\algC} always outperforms {\algB}. The improvement over {\algB} is the fact that {\algC} guarantees the actual weak regret to be bounded with any predefined confidence level.
	
	
%
%
\section{Performance Evaluation}\label{simulation}
\begin{figure*}[!htb]
		\begin{minipage}[t]{0.32\textwidth}
		\centering
		\includegraphics[width=.99\textwidth]{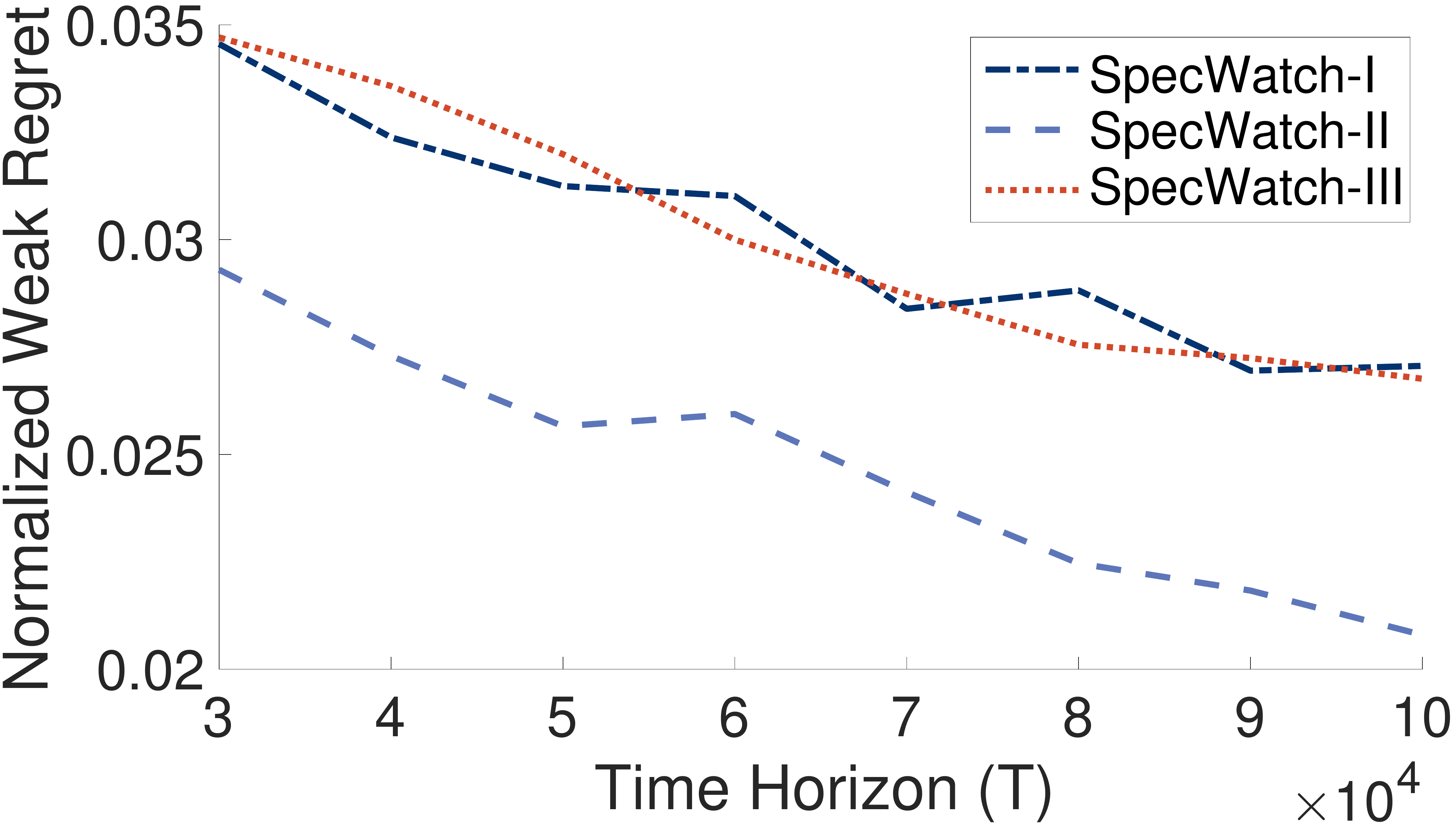}	
		\caption{Convergence of weak regrets} 
		\label{fig:ttoregret}
	\end{minipage}
\begin{minipage}[t]{0.32\textwidth}
		\centering
		\includegraphics[width=.99\textwidth]{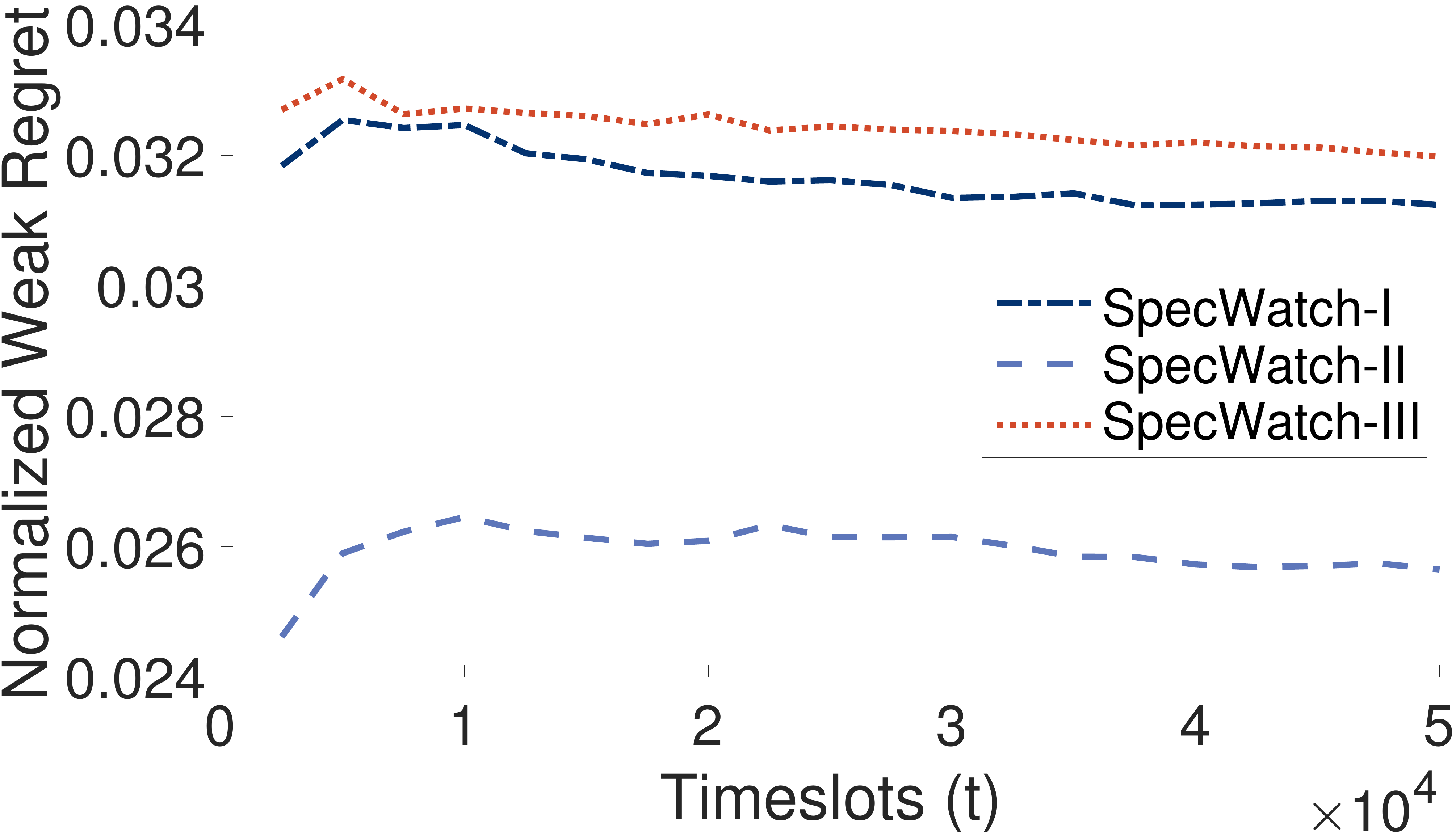}	
		\caption{Algorithm Comparison}
		\label{fig:compare}
	\end{minipage}
\begin{minipage}[t]{0.32\textwidth}
		\centering
		\includegraphics[width=.99\textwidth]{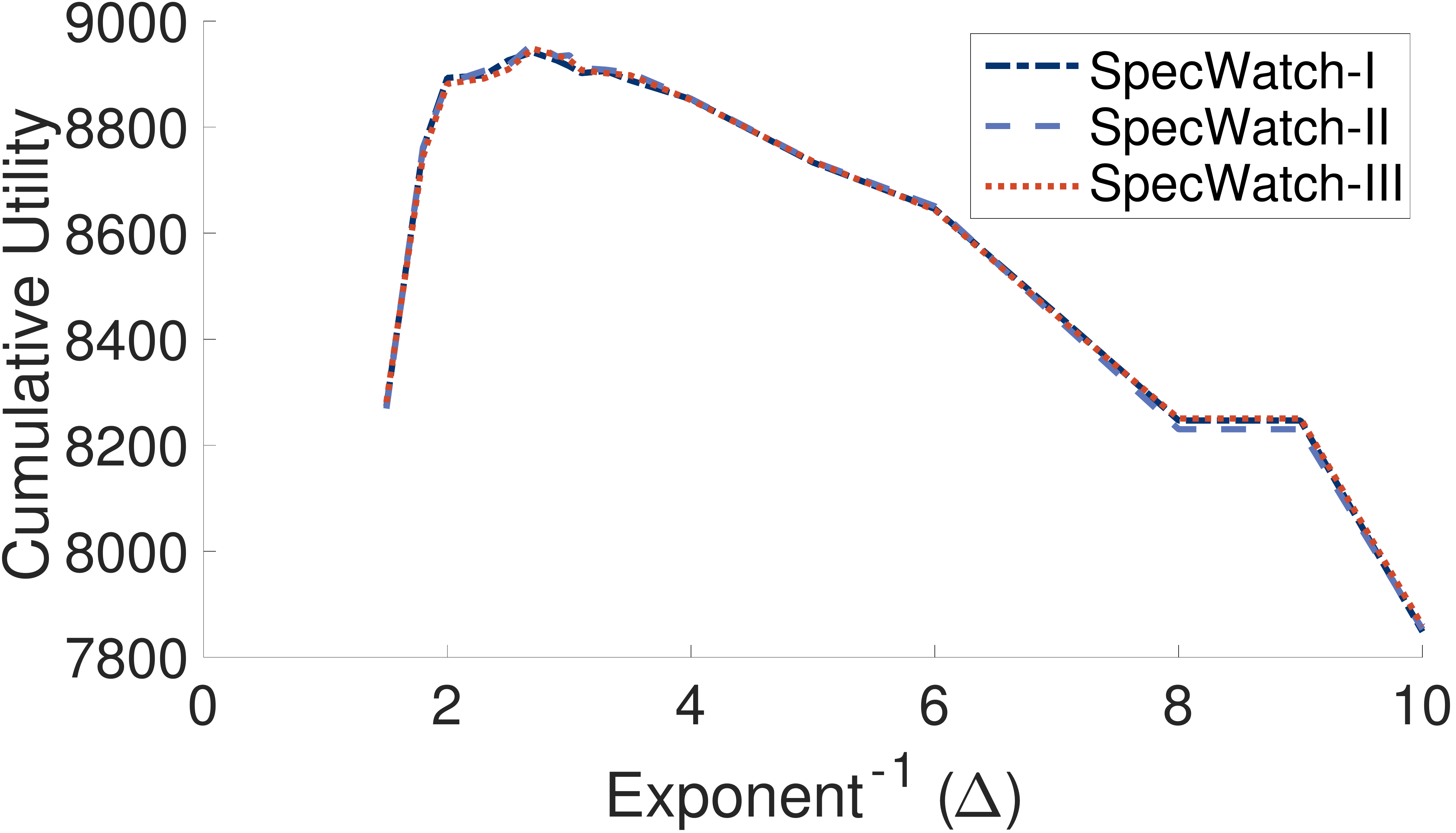}	
		\caption{Impact of batch size on cumulative utility}
		\label{fig:tau}
	\end{minipage}
\begin{minipage}[t]{0.1\textwidth}	
			\quad	
		\end{minipage}
\begin{minipage}[t]{0.32\textwidth}
		\centering
		\includegraphics[width=.99\textwidth]{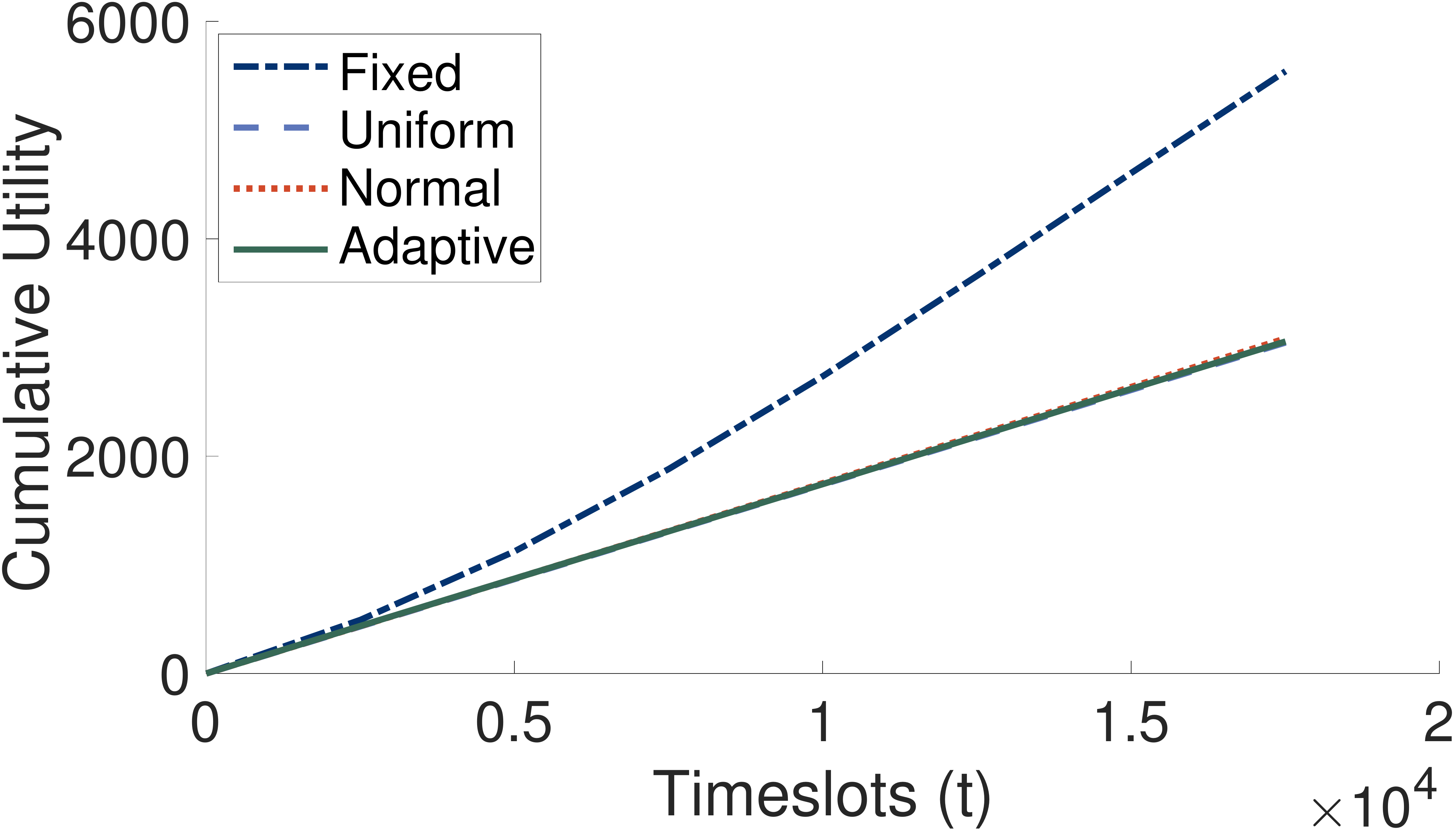}	
		\caption{Cumulative utility  under different adversary settings}
		\label{fig:typeU3abs}
	\end{minipage}	
\begin{minipage}[t]{0.32\textwidth}
		\centering
		\includegraphics[width=.99\textwidth]{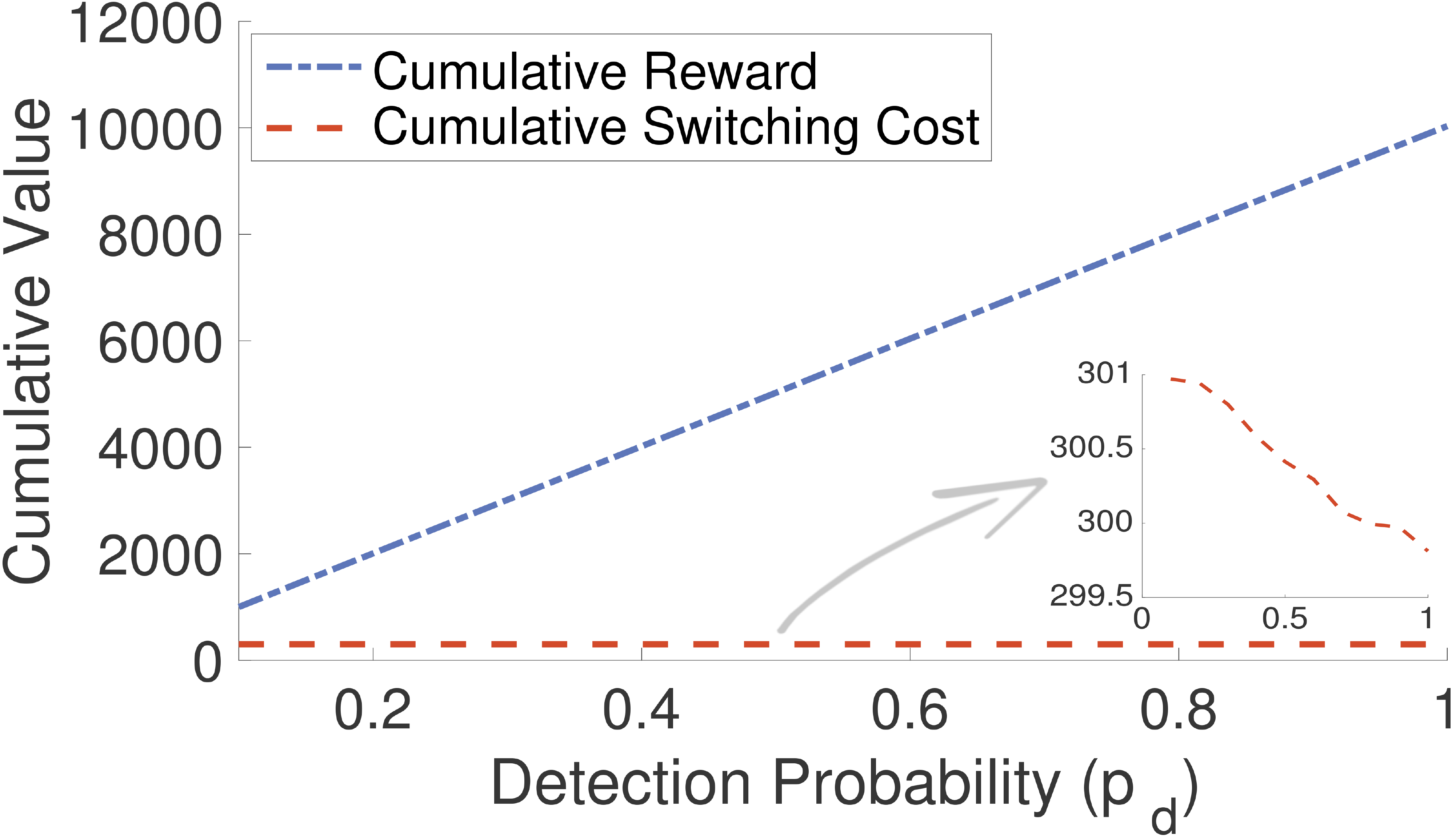} 
		\caption{Impact of detection probability on cumulative reward and switching cost} 
		\label{fig:pd}
	\end{minipage}	
\begin{minipage}[t]{0.1\textwidth}	
	\quad	
\end{minipage}
\begin{minipage}[t]{0.32\textwidth}
		\centering
		\includegraphics[width=.99\textwidth]{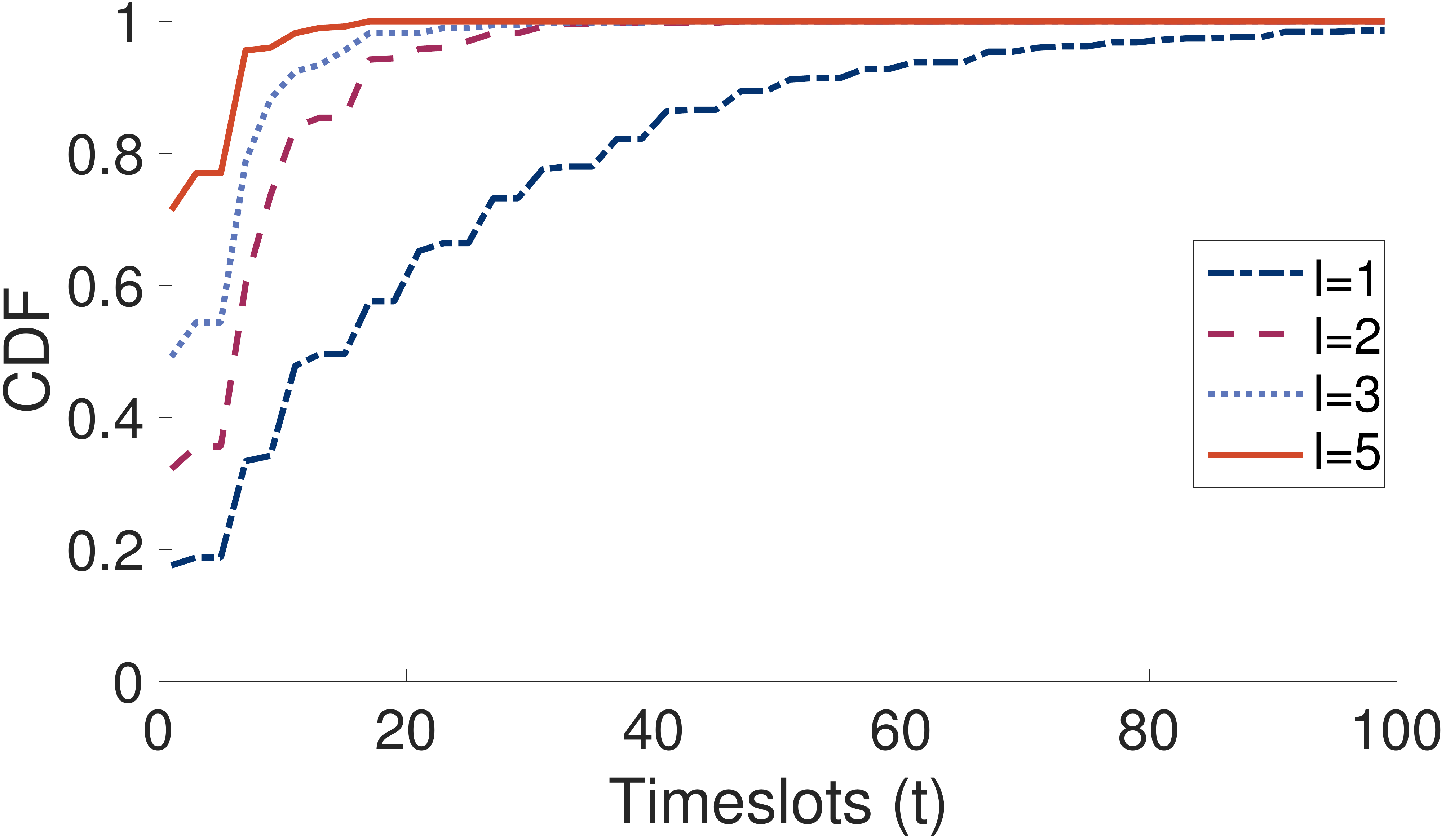}	
		\caption{Number of timeslots to detect the first misuse with different number of radios}
		\label{fig:CDF_l}
	\end{minipage}	
\begin{minipage}[t]{0.32\textwidth}
		\centering
		\includegraphics[width=.99\textwidth]{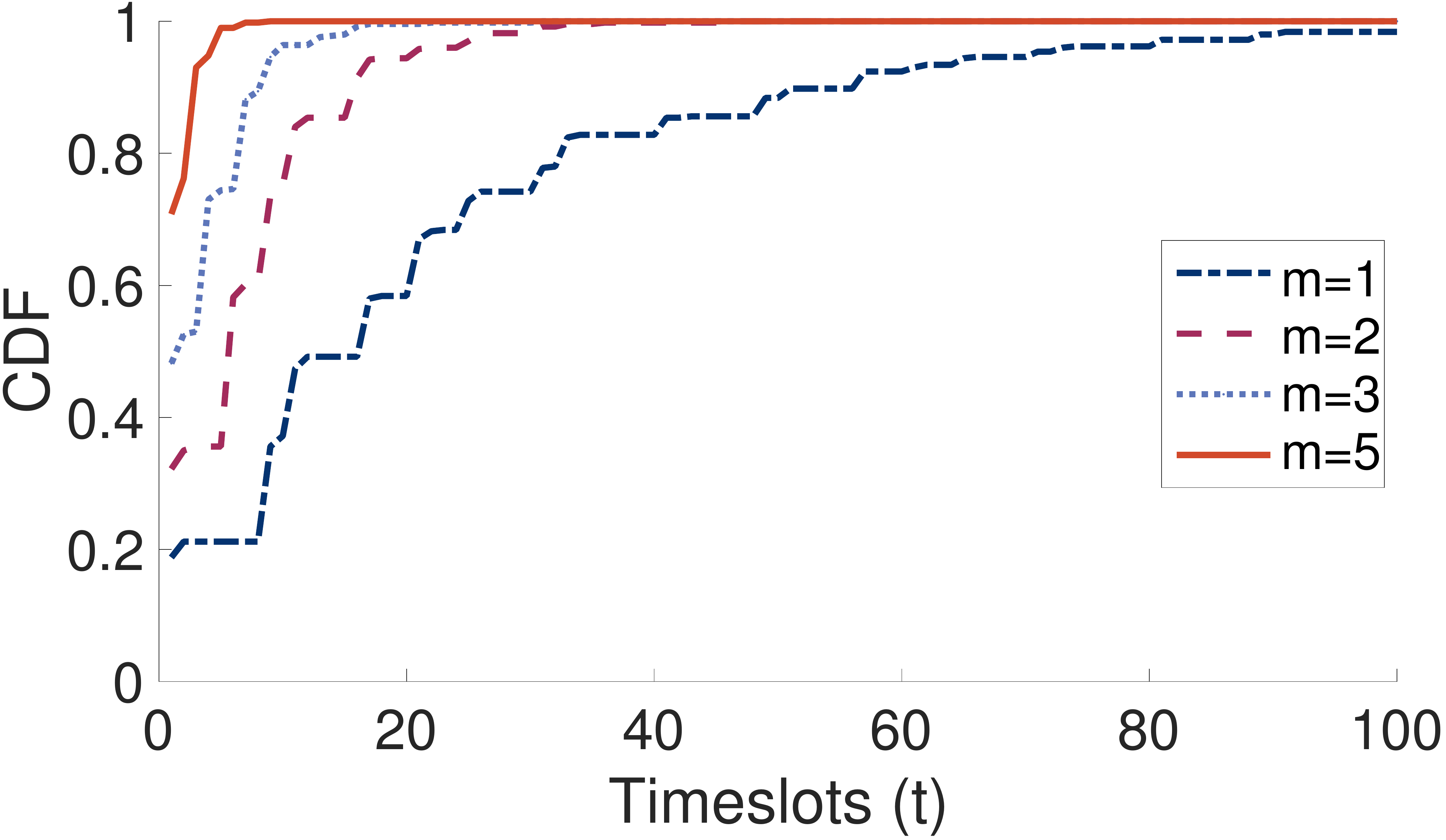}	
		\caption{Number of timeslots to detect the first misuse with different number of MUs} 
		\label{fig:CDF_m}
	\end{minipage}
\begin{minipage}[t]{0.32\textwidth}
		\centering
		\includegraphics[width=.99\textwidth]{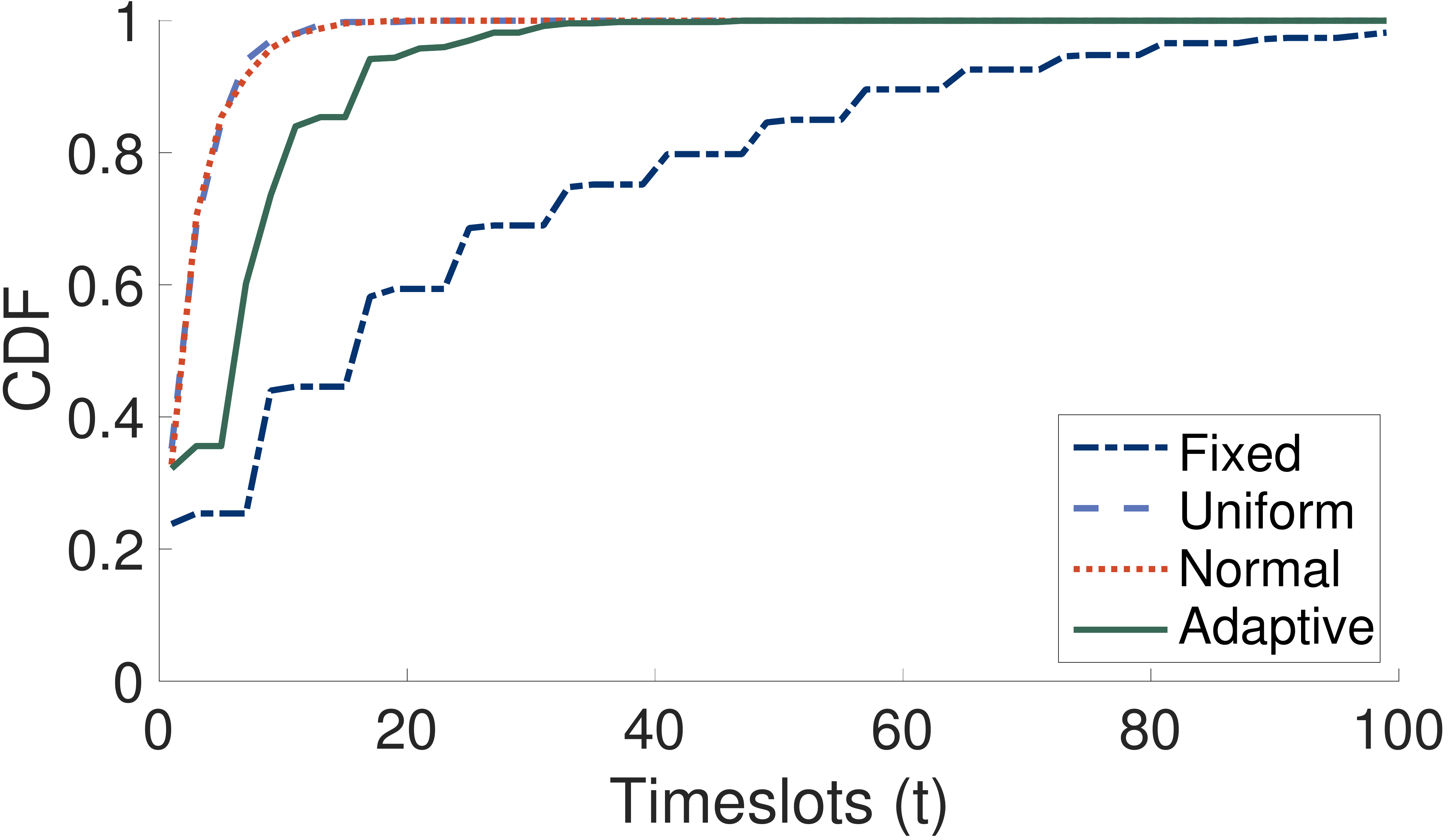}	
		\caption{Number of timeslots to detect the first misuse under different adversary settings} 
		\label{fig:CDF_MUtype}
	\end{minipage}
\begin{minipage}[t]{0.1\textwidth}	
	\quad	
\end{minipage}
\end{figure*}

We conduct extensive simulations to demonstrate the performance of our proposed online spectrum monitoring algorithms, {\algA}, {\algB}, and {\algC}. We first show the convergence of normalized weak regrets of all four algorithms and then study and compare their performances under different adversary settings. We also demonstrate the impact of the detection probability, the number of radios, the number of MUs, and adversary settings, on the algorithm performance. 

In the simulation setting, we consider $K=10$ channels, and we deploy a monitor with $l=2$ radios. We set the unit reward of successfully detecting on a single channel to be $r=0.3$ and the unit switching cost of tuning one radio to be $c=0.03$. If not specified, the detection probability of each radio is set to be $p_d=0.9$ as it is the recommended detection accuracy in consistent with~\cite{choi2013time}. The parameters of all algorithms are chosen as in the corollaries. If the monitor uses {\algC}, we set $\delta=0.5$ so that the weak regret is relatively small with an acceptable confidence level.

We assume there are $m=2$ MUs attacking channels either obliviously or adaptively.  Specifically, we consider four adversary settings, 
\begin{itemize}
	\item Fixed adversary (\textit{Fixed}): Each MU selects a fixed channel and never switches throughout the time horizon $T$.
	
	\item Uniform adversary (\textit{Uniform}): In every timeslot, each MU selects a channel uniformly at random.
	
	\item Normal adversary (\textit{Normal}): In every timeslot, each MU selects a channel following the same normal distribution.
	
	\item Adaptive adversary (\textit{Adaptive}): Each MU adopts modified {\algA}, where the actual channel reward is $r$ if the MU is not captured on that channel, and 0 otherwise.
	
\end{itemize}

The simulation results are shown below, and each of them is averaged over 100 trials.

\noindent \textbf{Weak Regret.} Fig.~\ref{fig:ttoregret} shows the normalized weak regrets of all algorithms decrease with time horizon $T$, which supports our theoretical analysis that the normalized weak regret converges to 0 as $T \to \infty$. In all following simulations, we fix the time horizon to be $T=50000$. Here we only show the result with adaptive adversary since in other adversary settings, the results are similar. Note that {\algB} outperforms other algorithms as shown in the figure, but it only means that {\algB}'s way to trade-off between exploration and exploitation is more appropriate under our current simulation setting.

Fig.~\ref{fig:compare} plots how the normalized weak regret decreases with timeslots under adaptive adversary. At the beginning, there is apparent fluctuation of the normalized weak regret. As time goes by, the monitor and the adaptive adversary enter a relatively stable stage, but we can still see the decreasing trend of the normalized weak regret, which indicates our algorithms are learning from monitoring history to make smart decisions.

\noindent \textbf{Impact of Algorithm Parameters.} Among all parameters of the four algorithms, the most important one is the batch size $\tau$, which controls the trade-off between cumulative reward and cumulative switching cost. As shown in Fig. \ref{fig:tau}, we conducted simulations where the batch size $\tau$ was set to be exactly $T^{1/\Delta}$ 
and plotted how the cumulative utility changes with $\Delta$, under the adaptive adversary.
It is shown that all algorithms achieve highest cumulative utility ratio when $\tau$ is around $T^{1/3}$, which is in consistency with our theoretical analysis. The performances of all algorithms are almost the same because they are designed based on the same framework and using same way to trade-off between rewards and switching costs.

\noindent \textbf{Cumulative Utility.} Fig.~\ref{fig:typeU3abs} 
plots the actual utilities gained by {\algC}. The figures for the other three algorithms are very similar and thus omitted.
We observe that the cumulative utilities under fixed adversary greatly exceed the other three settings, and the other three settings have similar results.

\noindent \textbf{Impact of System Parameters and Adversary Settings.}  In simulations, we fix the time horizon to be $T=50000$. Since the impacts on all algorithms are similar, we only present results of {\algC}.

Fig.~\ref{fig:pd} shows the impact of detection probability $p_d$ on the cumulative rewards and the cumulative switching cost. 

As expected,  the cumulative reward grows with decreasing slope as the detection probability increases.
The cumulative switching cost, however, has a decreasing trend. This is because the larger the detection probability, the more accurate for the monitor to evaluate each strategy; thus the best strategy is revealed more quickly, avoiding unnecessary switches and reducing cumulative switching cost.

We also study the impact the number of radios $l$, the number of MUs $m$, and the types of adversary on the performance of our algorithms. Fig.~\ref{fig:CDF_l}, Fig.~\ref{fig:CDF_m}, and Fig.~\ref{fig:CDF_MUtype} illustrate the cumulative distribution function (CDF) of expected number of timeslots to detect the first misuse. In general,  more radios or more MUs  make it sooner for the monitor to detect successfully. In Fig.~\ref{fig:CDF_MUtype}, the monitor takes the longest time to detect the first misuse under fixed adversary setting, which is because the monitor sticks to the same strategy for the whole batch to prevent switching costs. If the monitor does not choose the channels attacked by MUs at the first timeslot in a batch, it will not detect misuse for the following $\tau - 1$ timeslots. As a result, it takes longer time for the monitor to detect the first misuse under fixed adversary setting.


	
	\section{Conclusion}
In this paper, we studied the adversarial {\problemName} with unknown statistics by formulating it as an adversarial multi-armed bandit problem with switching costs (MAB-SC). To solve this problem, we proposed an online spectrum monitoring framework named {\alg} and designed two effective online algorithms, {\algsfull}. We rigorously proved that their weak regrets are bounded by $O(T^{2/3})$, which matches the lower bound of the general MAB-SC problem. Thus, they are asymptotically optimal. Moreover, our algorithms can guarantee the proved performance under any adversary setting and are independent of the underlying misuse detection technique.
	
	


	
	
\end{document}